\documentclass[12pt, DIV10,a4paper]{article}

\usepackage{color}
\usepackage{float}
\usepackage[bf,small]{caption2}
\usepackage{comment}
\usepackage{amsmath}
\usepackage{bigints}
\usepackage{rotating, booktabs}
\usepackage{amsopn}
\usepackage{amsfonts}
\usepackage{amsthm}
\usepackage{amssymb}
\usepackage{amsbsy}
\usepackage{gensymb}
\usepackage{multirow}
\usepackage{titling}
\usepackage{natbib}
\usepackage{tocbibind}
\hyphenpenalty=1000
\usepackage[a4paper,colorlinks,breaklinks,bookmarksopen,bookmarksnumbered]{hyperref}
\usepackage{epsfig,psfrag}
\usepackage{graphicx}
\usepackage{amsmath}
\usepackage{epstopdf}
\usepackage{tabularx}
\usepackage{subfigure}
\usepackage[mathscr]{euscript}
\usepackage[margin=1in]{geometry}
\usepackage{xr-hyper}
%\singlespacing
%\doublespacing

%\usepackage{mathtools} % Mathe
\usepackage{amsfonts} % Mathesymbole
\usepackage{calc}
\usepackage{titlesec}

\setcounter{secnumdepth}{4}

%\titleformat{\paragraph}
%{\normalfont\normalsize\bfseries}{\theparagraph}{1em}{}
%\titlespacing*{\paragraph}
%{0pt}{3.25ex plus 1ex minus .2ex}{1.5ex plus .2ex}

\newcommand{\ueq}[1][]{%
  \if\relax\detokenize{#1}\relax
    \sbox0{$\underbrace{=}_{}$}%
    \mathrel{\mathmakebox[\wd0]{=}}
  \else
    \mathrel{\underbrace{=}_{\mathclap{#1}}}
  \fi}

\newcommand{\bzero}{\boldsymbol{0}}

\newcommand {\ctn}{\cite}

\pagenumbering{arabic}

\usepackage{url}
\hyphenpenalty=1000
%\documentclass[]{gSCS2e}

%\input{tex/ Definitions}

%For using epstopdf for conversion of .eps figures to .pdf, use te following command to compile:
%pdflatex -interaction=nonstopmode -shell-escape %.tex

%\newtheorem{theorem}{Theorem}[section]
%\newtheorem{lemma}[theorem]{Lemma}
%\newtheorem{corollary}[theorem]{Corollary}

%\newcommand{\R}{I\!\!R}

\newcommand{\btheta}{\boldsymbol{\theta}}

\newcommand{\bxi}{\boldsymbol{\xi}}

\newcommand{\bgamma}{\boldsymbol{\gamma}}
\newcommand{\bSigma}{\boldsymbol{\Sigma}}

\newcommand{\bpi}{\boldsymbol{\pi}}

\newcommand{\bmu}{\boldsymbol{\mu}}

\newcommand{\bvarphi}{\boldsymbol{\varphi}}
\newcommand{\bvartheta}{\boldsymbol{\vartheta}}

\newcommand{\bB}{\boldsymbol{B}}

\newcommand{\bG}{\boldsymbol{G}}

\newcommand{\bA}{\boldsymbol{A}}

\newcommand{\bU}{\boldsymbol{U}}

\newcommand{\bX}{\boldsymbol{X}}
\newcommand{\by}{\boldsymbol{y}}
\newcommand{\bY}{\boldsymbol{Y}}

\newcommand{\bz}{\boldsymbol{z}}

\newtheorem{theorem}{Theorem}

\newtheorem{remark}[theorem]{Remark}

\newcommand{\topline}{\hrule height 1pt width \textwidth \vspace*{2pt}}
\newcommand{\botline}{\vspace*{2pt}\hrule height 1pt width \textwidth \vspace*{4pt}}
\newtheorem{algo}{Algorithm} 

%\numberwithin{equation}{section}
%\numberwithin{algo}{section}
%\numberwithin{table}{section}
%\numberwithin{figure}{section}

% ---------------------------
% The following produces *bold* AND *italic* (as for vector notation):
%

%\externaldocument{supp}

\bibliographystyle{natbib}

\begin{document}

\title{\vspace{-0.8in}
\textbf{IID Sampling from Posterior Dirichlet Process Mixtures}}
\author{Sourabh Bhattacharya\thanks{
Sourabh Bhattacharya is an Associate Professor in Interdisciplinary Statistical Research Unit, Indian Statistical
Institute, 203, B. T. Road, Kolkata 700108.
Corresponding e-mail: sourabh@isical.ac.in.}}
\date{\vspace{-0.5in}}
\maketitle%
	
\begin{abstract}

The influence of Dirichlet process mixture is ubiquitous in the Bayesian nonparametrics literature. But sampling from its posterior distribution
remains a challenge, despite the advent of various Markov chain Monte Carlo methods. The primary challenge is the infinite-dimensional setup,
and even if the infinite-dimensional random measure is integrated out, high-dimensionality and discreteness still remain difficult issues to deal with.

In this article, exploiting the key ideas proposed in \ctn{Bhatta21a}, we propose a novel methodology for drawing $iid$ realizations from
posteriors of Dirichlet process mixtures. We focus in particular on the more general and flexible model of \ctn{Bhattacharya08}, so that the methods
developed here are simply applicable to the traditional Dirichlet process mixture.

We illustrate our ideas on the well-known enzyme, acidity and the galaxy datasets, which are usually considered benchmark datasets for mixture applications.
Generating $10,000$ $iid$ realizations from the Dirichlet process mixture posterior of \ctn{Bhattacharya08} given these datasets took $19$ minutes, $8$ minutes 
and $5$ minutes, respectively, in our parallel implementation.
\\[2mm]
{\bf Keywords:} {\it Dirichlet process mixture; Ellipsoid; Minorization; Parallel computing; Perfect sampling; Residual distribution.}

\end{abstract}
	
%\tableofcontents
\pagebreak

\section{Introduction}
\label{sec:introduction}
The Bayesian nonparametric literature is heavily dominated by Dirichlet process (DP) mixtures, which go back to \ctn{Antoniak74} and \ctn{Ferguson83}.
The basic premise is given by the following setup: for observed data $y_i$: $i=1,\ldots,n$, the conditional distribution of $y_i$ given parameters
$\btheta_i$ is $[y_i|\btheta_i]\sim f(\cdot|\btheta_i)$, where $f(\cdot|\btheta)$ is a parametric distribution with parameters $\btheta$, and $\btheta_i\sim G$ independently,
where $G$ is a random distribution on which some appropriate prior must be assigned. The DP mixture model considers the following prior for $G$:
$G\sim DP(\alpha G_0)$, the Dirichlet process prior introduced in \ctn{Ferguson73}; here $\alpha>0$ is a scale parameter and $G_0$ is the base (expected)
distribution of $G$. Thus, conditionally on $G$, the distribution of $y_i$ is a mixture of $f(\cdot|\btheta)$ over the distribution $G$ of $\btheta$. 
Usually a prior is placed on the scale parameter $\alpha$. 

Although DP mixtures have already seen applications in almost all areas of statistics, its journey perhaps began with the recognition of its versatility
with respect to clustering, nonparametric regression and nonparametric density estimation. Of course, the beginning of the computer era in the $1990$s  
played a very significant role in the development of the computational aspects of posterior DP mixtures. \ctn{Escobar94}, \ctn{Escobar95}, \ctn{west94}, \ctn{MacEachern94},
\ctn{Muller96}, \ctn{MacEachern98}, etc. seem to recognize
the practical and computational aspects of such models and developed various Gibbs sampling algorithms based on a P\'{o}lya urn scheme obtained after 
integrating out the infinite-dimensional random measure $G$. \ctn{Neal00} provided a comprehensive overview of the various Markov chain Monte Carlo (MCMC) algorithms used for  
sampling from posterior DP mixtures, and also provided algorithms for non-conjugate setups, that is, when $f(\cdot|\btheta)$ and $G_0(\btheta)$ are non-conjugate.
\ctn{Green01}, \ctn{Jain04} and \ctn{Jain07} propose split-merge moves embedded in reversible jump (\ctn{Green95}, \ctn{Richardson97}) 
and Metropolis-Hastings procedures to implement DP mixtures, in conjugate and non-conjugate setups.

\ctn{Ishwaran01} (see also \ctn{Ishwaran00}) proposed a block Gibbs sampling algorithm when $G$ is retained in the model; their key idea is to truncate
$G$ to a finite-dimensional random measure such that the latter is almost indistinguishable from the original random measure. 
On the other hand, \ctn{Papas08} proposed a retrospective MCMC method which does not require truncation of $G$; an alternative method based on slice sampling
is proposed by \ctn{Walker07}.

All the existing MCMC sampling methods for DP mixtures have their advantages and disadvantages with respect to mixing and implementation time, and it is difficult
to single out any MCMC method that is guaranteed to outperform the others in all situations.
The ideal scenario, although it might seem ``too ambitious" to the statistical and probabilistic community, is to devise an $iid$ sampling procedure.
Indeed, our objective in this article is to propose a novel methodology for generating $iid$ realizations from the posterior of DP mixtures. We specifically
focus on the much more flexible and efficient model proposed by \ctn{Bhattacharya08}, which includes the traditional DP mixture as a special case. Hence, although we develop
the $iid$ sampling method with respect to \ctn{Bhattacharya08}, it is simply applicable to the traditional DP mixture. Our idea is to first truncate $G$ to render it 
finite-dimensional, but such that the truncated version is practically indistinguishable from the original one. Indeed, we obtain an upper bound for the $L_1$-distance 
between the predictive distributions of the original and truncated versions which is very significantly smaller than the bound obtained by \ctn{Ishwaran01}
for the traditional DP mixture.
Such a bound ensures that the posterior realizations under the original random measure and the truncated one, are identical in practice.
The key to our highly efficient upper bound is the bounded number of components of the mixture model for the observations, which are mixed with respect to the DP.

Once such truncation is established, we invoke the general $iid$ sampling strategy on finite-dimensional Euclidean spaces proposed by \ctn{Bhatta21a}. 
In a nutshell, the idea is to create an infinite sequence of closed, concentric ellipsoids, representing the target distribution as an infinite mixture
on the ellipsoids and the annuli (regions between successive concentric ellipsoids), drawing a mixture component with the appropriate probability
and finally simulating perfectly from the mixture component using a novel strategy. In our DP context, although the parameters associated with the truncated
random measure can be represented in a finite-dimensional Euclidean space, the parameters of the mixture distribution of the observations coincide with each other
with positive probabilities, and hence the method of \ctn{Bhatta21a} can not be directly applied here. We thus extend his procedure by including the truncated
random measure in the proposal associated with perfect sampling strategy, so that once its parameters are simulated aided by a suitable diffeomorphic
transformation for efficiency, the rest of the parameters are simply drawn from the truncated measure,
in a way that the entire procedure of $iid$ sampling remains ``perfect".

We apply our $iid$ sampling method to three well-known datasets, namely, the enzyme, acidity and galaxy data, which are are usually considered to be
benchmarks for mixture applications. Generation of $10,000$ $iid$ realizations from the posterior of \ctn{Bhattacharya08} for these datasets took
$19$ minutes, $8$ minutes and $5$ minutes, respectively, with parallel implementation on $80$ cores. The resultant Bayesian inferences turned out to be
very encouraging.

The rest of our article is organized as follows. In Section \ref{sec:dp_bounded} we begin with a brief description of the DP mixture model of \ctn{Bhattacharya08}.
The $iid$ sampling idea for such DP mixture is detailed in Section \ref{sec:idea}. In Section \ref{sec:applications} we provide details on the application of our $iid$ 
sampling procedure to the three benchmark datasets.
We summarize our ideas and make concluding remarks in Section \ref{sec:conclusion}.

\section{The DP mixture with bounded number of components for the observational mixture model}
\label{sec:dp_bounded}

\subsection{Model description}
\label{subsec:model}
Letting $\by=(y_1,\ldots,y_n)$ denote the data of size $n$, a slightly extended version of the DP mixture model of \ctn{Bhattacharya08} is as follows:
\begin{align}
	[\by|\bxi,\bpi]&\stackrel{iid}{\sim}\sum_{j=1}^M\pi_jf(\cdot|\xi_j);\label{eq:dp1}\\
	[\xi_1,\ldots,\xi_M|G]&\stackrel{iid}{\sim} G;\label{eq:dp2}\\
	G&\sim DP\left(\alpha G_0\right);\label{eq:dp3}\\
	\pi_j&=\frac{\exp\left(\psi_j\right)}{\sum_{k=1}^M\exp\left(\psi_k\right)};~j=1,\ldots,M;\label{eq:dp4}\\
	\psi_j&\stackrel{iid}{\sim}f_{\psi},\label{eq:dp5}
	%\bpi&\sim \mathcal D(\beta_1,\ldots,\beta_M).\label{eq:dp4}
\end{align}
where $\bxi=(\xi_1,\ldots,\xi_M)$, $\bpi=(\pi_1,\ldots,\pi_M)$ and $f_{\psi}$ denotes any appropriate prior distribution for the $\psi_j$.

%the likelihood times prior can be represented as
%\begin{align}
%	\prod_{i=1}^n\left[\sum_{j=1}^M\pi_jf(y_i|\xi_j)\right]
%	\times\prod_{j=1}^M[\psi_j]\times\prod_{j=1}^MG(\xi_j)\times\prod_{i=1}^{\infty}[\phi_i]\times\prod_{i=1}^{\infty}[V_i].
%	\label{eq:dp7}
%\end{align}

Note that (\ref{eq:dp1}) shows that the model for the individual observations is a mixture with a maximum of $M$ components.
Also observe that under the above model, given $G$, %the SB model also admits the same marginal representation  (see also the supplement of \ctn{Sabya10}), since 
for any value of $M$, the prior predictive distribution is given by
\begin{eqnarray}
	f_{G}(y)&=& \sum_{i=1}^M\pi_i \int f(y\vert\xi_i)\prod_{j=1}^MdG(\xi_j)\nonumber\\
&=&\sum_{i=1}^M\pi_i\int f(y\vert\xi_i)dG(\xi_i)\nonumber\\
&=&\int f(y\vert\xi)dG(\xi),\notag
%&=&\sum_{i=1}^{\infty}p_iK(y\vert\phi_i).
%\label{eq:sb4}
\end{eqnarray}
so that the marginal distribution of any data point given $G$ is the same as that of the traditional DP mixture. However, given $G$, $y_1,\ldots,y_n$ are
not independent, as their joint distribution conditional on $G$, shows below: 
\begin{equation*}
[\by|G,\bpi]=\int \left\{\prod_{i=1}^n\left[\sum_{j=1}^M\pi_jf(y_i|\xi_j)\right]\right\}\prod_{j=1}^MdG(\xi_j).
\end{equation*}
Thus, the DP mixture model of \ctn{Bhattacharya08} is very significantly different from the traditional DP mixture, and is perhaps much more realistic in terms
of the dependence structure. Further note that if $M=n$, $\pi_j=\frac{1}{M}$ for $j=1,\ldots,M$ and for each $i$, $y_i$ is set to come from $f(\cdot|\xi_i)$,
then the above model reduces to the traditional DP model. Thus, the traditional DP model is a special case of \ctn{Bhattacharya08}. Numerous theoretical, asymptotical,
computational and application-wise advantages of DP mixture of \ctn{Bhattacharya08} over the traditional DP mixture are noted in 
\ctn{Bhattacharya08}, \ctn{Sabya11}, \ctn{Sabya12b}, \ctn{Sabya21}.

\subsection{Truncation of the infinite-dimensional random measure}
\label{subsec:truncation}

It holds almost surely (see, for example, \ctn{Sethuraman94}), that 
\begin{equation}
G=\sum_{i=1}^{\infty}w_i\delta_{\phi_i},
	\label{eq:dp6}
\end{equation}
where $w_1=V_1$ and for $i=2,3,\ldots$, $w_i=V_i\prod_{j<i}(1-V_j)$, $\phi_i\stackrel{iid}{\sim}G_0$ and $V_i\stackrel{iid}{\sim}Beta(1,\alpha)$. 

As in \ctn{Ishwaran01} (see also \ctn{Ishwaran00}), we consider the following truncation of (\ref{eq:dp6}): $w_1=V_1$ and $w_i=V_i\prod_{j<i}(1-V_j)$, 
for $j=2,\ldots,N$. We set $V_N=1$
so that $\sum_{i=1}^Nw_i=1$. Let $G_N$ denote the truncated probability measure corresponding to (\ref{eq:dp6}) with $N$ summands.
That is,
\begin{equation}
G_N=\sum_{i=1}^{N}w_i\delta_{\phi_i},
	\label{eq:dp_trunc}
\end{equation}
Let %$\bxi=(\xi_1,\ldots,\xi_M)$, 
$\bvarphi_N=(\alpha,\psi_1,\ldots,\psi_M,\phi_1,\ldots,\phi_N,V_1,\ldots,V_{N-1})$ and 
$\bvarphi=(\alpha,\psi_1,\ldots,\psi_M,\phi_1,\phi_2,\ldots,V_1,V_2,\ldots)$.

Let $\bz=(z_1,\ldots,z_n)$ denote the allocation variables correspondng to $\by$, that is, for $i=1,\ldots,n$, and $j=1,\ldots,M$, 
$z_i=j$ indicates that $[y_i|z_i=j,\bxi]\sim f(\cdot|\xi_j)$. The probability of the event $z_i=j$ is given by $[z_i=j]=\pi_j$. With these,
consider the marginal distribution of $\by$ corresponding to $G_N$ as follows:
\begin{align}
	m_N(\by)&=\sum_{\bz}[\bz]\int\prod_{i=1}^n[y_i|z_i,\bxi][d\bxi|\bvarphi_N][\bvarphi_N]d\bvarphi_N\notag\\
	&=\sum_{\bz}[\bz]\int\prod_{i=1}^n[y_i|z_i,\bxi]\pi_N(d\bxi),
	\label{eq:marg1}
\end{align}
where $\pi_N(\bxi)=\int [\bxi|\bvarphi_N][\bvarphi_N]d\bvarphi_N$ stands for the marginal distribution of $\bxi$ corresponding to $G_N$.
The marginal distribution of $\by$ corresponding to $G$ is given by
\begin{align}
	m_{\infty}(\by)&=\sum_{\bz}[\bz]\int\prod_{i=1}^n[y_i|z_i,\bxi][d\bxi|\bvarphi][\bvarphi]d\bvarphi\notag\\
	&=\sum_{\bz}[\bz]\int\prod_{i=1}^n[y_i|z_i,\bxi]\pi_{\infty}(d\bxi),
	\label{eq:marg2}
\end{align}
where $\pi_{\infty}(\bxi)=\int [\bxi|\bvarphi][\bvarphi]d\bvarphi$ stands for the marginal distribution of $\bxi$ corresponding to $G$.
\begin{theorem}
	\label{theorem:truncation}
	\begin{align}
	\int|m_N(\by)-m_{\infty}(\by)|d\by\leq 2\left[1-E\left\{\left(\sum_{i=1}^{N-1}w_i\right)^M\right\}\right]\approx 4M\exp\left(-(N-1)/\alpha\right).\notag
	\end{align}
\end{theorem}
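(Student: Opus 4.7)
The plan is a coupling argument combined with the data-processing property of total variation. I will realize $\bxi \sim \pi_\infty$ and $\bxi^N \sim \pi_N$ on a common probability space so that they coincide on a high-probability event, and then mechanically translate this into the advertised $L_1$ bound on $m_\infty - m_N$, since $\by$ depends on the coupled variables only through $\bxi$ (respectively $\bxi^N$).

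To construct the coupling, draw $\phi_i \stackrel{iid}{\sim} G_0$ and $V_i \stackrel{iid}{\sim} \text{Beta}(1,\alpha)$ for every $i \geq 1$, thereby simultaneously defining $G$ and its truncation $G_N$ (the latter setting $V_N = 1$, which assigns the residual mass $R = \prod_{j=1}^{N-1}(1-V_j)$ to $\phi_N$). Independently, sample iid indices $I_1, \ldots, I_M$ with $P(I_j = i) = w_i$, and set $\xi_j = \phi_{I_j}$ and $\xi_j^N = \phi_{\min(I_j, N)}$. A direct check, using the identity $\sum_{k \geq N} w_k = R$ together with the conditional iid-ness of the $I_j$'s, shows $(\xi_j) \sim \pi_\infty$ and $(\xi_j^N) \sim \pi_N$, while $\bxi = \bxi^N$ on the event $A = \{\max_j I_j \leq N-1\}$, with $P(A \mid V_1, V_2, \ldots) = \left(\sum_{i=1}^{N-1} w_i\right)^M$. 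The subtle point I would treat carefully is verifying that $(\xi_j^N)$ is iid given $G_N$ (not just given $(V_i)_{i \geq 1}$): this holds because $\xi_j^N$ is a deterministic function of $I_j$ and $(\phi_1,\ldots,\phi_N)$, and the $I_j$'s are conditionally iid given $(V_i)_{i \geq 1}$, which in particular contains $G_N$.

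Next, write $h(\by \mid \bxi) = \sum_\bz [\bz] \prod_{i=1}^n [y_i \mid z_i, \bxi]$, which for every $\bxi$ is a probability density in $\by$. Then $m_\infty(\by) = E[h(\by \mid \bxi)]$ and $m_N(\by) = E[h(\by \mid \bxi^N)]$ on the coupling space, and since $h(\cdot \mid \bxi) = h(\cdot \mid \bxi^N)$ on $A$,
$|m_\infty(\by) - m_N(\by)| \leq E\{[h(\by \mid \bxi) + h(\by \mid \bxi^N)]\mathbf{1}_{A^c}\}$. Integrating in $\by$ and applying Fubini (using that each $h(\cdot\mid\bxi)$ has total mass one) gives $\int|m_\infty - m_N| d\by \leq 2 P(A^c) = 2\left[1 - E\left\{\left(\sum_{i=1}^{N-1} w_i\right)^M\right\}\right]$, which is the first inequality of the theorem. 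For the asymptotic approximation $\approx 4M \exp(-(N-1)/\alpha)$, let $R = \prod_{j=1}^{N-1}(1-V_j)$, apply Bernoulli's inequality $1-(1-R)^M \leq MR$ to obtain $1 - E\{(1-R)^M\} \leq ME[R] = M\left(\alpha/(\alpha+1)\right)^{N-1}$, and note $(\alpha/(\alpha+1))^{N-1} \sim \exp(-(N-1)/\alpha)$ for large $\alpha$; the leading constant in the displayed approximation is a slight slack absorbing both this and Bernoulli's inequality. The main obstacle is entirely in the coupling verification above; once $(\xi_j^N)$ is seen to have the right joint law $\pi_N$, everything else is routine.
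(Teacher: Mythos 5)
Your proof is correct and follows essentially the same route as the paper: the paper reduces $\int|m_N(\by)-m_{\infty}(\by)|\,d\by$ to $2D(\pi_N,\pi_{\infty})$ by the same data-processing step and then defers to Ishwaran and James for the bound $D(\pi_N,\pi_{\infty})\leq 1-E\{(\sum_{i=1}^{N-1}w_i)^M\}$, which is precisely the coupling you construct explicitly (agreement of $\bxi$ and $\bxi^N$ on the event that all $M$ sampled stick indices fall below $N$). Your remark that the argument actually yields $\approx 2M\exp\left(-(N-1)/\alpha\right)$, so that the displayed constant $4$ carries slack inherited from the Ishwaran--James statement, is also accurate.
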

\begin{proof}
Note that
	\begin{align}
		|m_N(\by)-m_{\infty}(\by)|\leq\sum_{\bz}[\bz]\int\prod_{i=1}^n[y_i|z_i,\bxi]|\pi_N(d\bxi)-\pi_{\infty}(d\bxi)|,\notag\\
	\end{align}
	so that 
	\begin{equation}
		\int|m_N(\by)-m_{\infty}(\by)|d\by\leq 2 D(\pi_N,\pi_{\infty}),
	\end{equation}
	where $D(\pi_N,\pi_{\infty})$ is the total variation distance between the probability measures $\pi_N$ and $\pi_{\infty}$.
	The rest of the proof follows in the similar lines as that of \ctn{Ishwaran00}.
\end{proof}

\begin{remark}
\label{remark:truncation}
The crucial advantage of the upper bound of Theorem \ref{theorem:truncation} is that the bound depends only upon $M$, $N$ and $\alpha$, and not upon $n$, the
sample size. Although $n$ may be very large, $M$ is usually chosen to be much smaller, and hence our upper bound is significantly smaller than the corresponding
upper bound of \ctn{Ishwaran01} in the traditional DP mixture context, given by $4n\exp\left\{-(N-1)/\alpha\right\}$.

To illustrate the differences between the two different upper bounds, note that with $M=30$ and $N=50$, for $\alpha=3$ for instance, 
our upper bound is given by $4M\exp\left\{-(N-1)/\alpha\right\}=9.676\times 10^{-6}$, 
whereas for the traditional DP mixture model, for $n=245$, the size of the enzyme dataset, 
the corresponding upper bound of \ctn{Ishwaran01} is $4n\exp\left\{-(N-1)/\alpha\right\}=7.902\times 10^{-5}$. 
For the sizes $n=155$ and $n=82$ for the acidity and the galaxy datasets, the upper bound for the \ctn{Bhattacharya08} model remains the same for the same $M$, $N$ and $\alpha$,
but for the traditional DP mixture, the upper bounds are $4.999\times 10^{-5}$ and $2.645\times 10^{-5}$, respectively.
Thus, the upper bound for our model is an order of magnitude smaller than for the traditional DP mixture.
\end{remark}

\subsection{Reparameterization}
\label{subsec:reparameterization}
For our convenience, for $i=1,\ldots,N-1$, let us reparameterize $V_i$ as 
$V_i=\frac{\exp\left(\zeta_i\right)}{1+\exp\left(\zeta_i\right)}$ and $\alpha$ as $\alpha=\exp\left(\tilde\alpha\right)$.
Let $\btheta=(\bxi,\bvartheta)$, where  
$\bvartheta=\left(\tilde\alpha,\psi_1,\ldots,\psi_M,\phi_1,\ldots,\phi_N,\zeta_1,\ldots,\zeta_N\right)$.
%$\btheta=\left(\alpha,\xi_1,\ldots,\xi_M,\psi_1,\ldots,\psi_M,\phi_1,\ldots,\phi_N,\zeta_1,\ldots,\zeta_N\right)$.
Then the reparameterized version of the joint posterior, proportional to likelihood times prior becomes
\begin{align}
	\pi(\btheta|\by)\propto	\prod_{i=1}^n\left[\sum_{j=1}^M\pi_jf(y_i|\xi_j)\right]
	\times\prod_{j=1}^M[\psi_j]\times\prod_{j=1}^MG_N(\xi_j)\times\prod_{i=1}^{N}[\phi_i]\times\prod_{i=1}^{N}[\zeta_i]\times[\tilde\alpha].
	\label{eq:dp7}
\end{align}
We shall henceforth consider this reparameterized setup for our purpose.

\section{The $iid$ sampling idea}
\label{sec:idea}

%Let $\pi(\btheta)$ be the target distribution supported on $\mathbb R^d$, where $d\geq 1$. 
%Here $\btheta=(\theta_1,\ldots,\theta_d)^T$. 
Note that our DP mixture posterior distribution can be represented as
\begin{equation}
	\pi(\btheta|\by)=\sum_{i=1}^{\infty}\pi(\bB_i\times\bA_i|\by)\pi_i(\btheta|\by),
	\label{eq:p1}
\end{equation}
where $\bA_i$ are disjoint compact subsets of $\mathbb R^d$ with $d=2N+M+1$ such that $\cup_{i=1}^{\infty}\bB_i\times\bA_i=\mathbb R^{d+M}$. Here $\bB_i$'s correspond to
$\bxi$ and $\bA_i$'s correspond to $\bvartheta$. In (\ref{eq:p1}),
\begin{equation}
	\pi_i(\btheta|\by)=\frac{\pi(\btheta|\by)}{\pi(\bB_i\times\bA_i|\by)}I_{\bB_i\times\bA_i}(\btheta), 
	\label{eq:p2}
\end{equation}
is the distribution of $\btheta$ restricted on $\bB_i\times\bA_i$; $I_{\bB_i\times\bA_i}$ being the indicator function of $\bB_i\times\bA_i$.
%In (\ref{eq:p1}), 
Also, $\pi(\bB_i\times\bA_i|\by)=\int_{\bB_i\times\bA_i}\pi(d\btheta|\by)\geq 0$. Clearly, $\sum_{i=1}^{\infty}\pi(\bB_i\times\bA_i|\by)=1$.

The key idea of generating $iid$ realizations from $\pi(\btheta|\by)$ is to randomly select $\pi_i(\cdot|\by)$ with probability $\pi(\bB_i\times\bA_i|\by)$ 
and then to perfectly simulate from $\pi_i(\cdot|\by)$. 

Note that due to (\ref{eq:dp_trunc}), $\xi_j$'s must take one of the $\phi_i$ values. Hence, the choice of the sets $\bA_i$ determine the sets $\bB_i$.
Hence, it is sufficient to adequately choose $\bA_i$, the method of which we discuss next.

\subsection{Choice of the sets $\bA_i$ and estimation of $\pi(\bB_i\times\bA_i|\by)$}
\label{subsec:choice_sets}
For some appropriate $d$-dimensional vector $\bmu$ and $d\times d$ positive definite scale matrix $\bSigma$,
we shall set $\bA_i=\{\bvartheta:c_{i-1}\leq (\bvartheta-\bmu)^T\bSigma^{-1}(\bvartheta-\bmu)\leq c_i\}$ for $i=1,2,\ldots$, where $0=c_0<c_1<c_2<\cdots$. 
Note that $\bA_1=\{\bvartheta:(\bvartheta-\bmu)^T\bSigma^{-1}(\bvartheta-\bmu)\leq c_1\}$, and for $i\geq 2$, 
$\bA_i=\{\bvartheta:(\bvartheta-\bmu)^T\bSigma^{-1}(\bvartheta-\bmu)\leq c_i\}\setminus\cup_{j=1}^{i-1}\bA_j$.
%Observe that ideally one should set 
%$\bA_i=\{\bvartheta:c_{i-1}<(\bvartheta-\bmu)^T\bSigma^{-1}(\bvartheta-\bmu)\leq c_i\}$, but since
%$\bvartheta$ has continuous distribution in our setup, we shall continue with $\bA_i=\{\bvartheta:c_{i-1}\leq (\bvartheta-\bmu)^T\bSigma^{-1}(\bvartheta-\bmu)\leq c_i\}$.

%Thus, the first member of the sequence of sets $\bA_i$; $i\geq 1$, is a closed ellipsoid, while the others are closed annuli, the regions between two 
%successive concentric closed ellipsoids.
%The compact ellipsoid $\bA_1$ tends to support the modal region of the target distribution $\pi$ and for increasing $i\geq 2$, the compact 
%annuli $\bA_i$ tend to support the tail regions
%of the target distribution. Thus in practice, it is useful to choose $\sqrt{c_1}$ associated with $\bA_1$ to be relatively large compared
%to $\sqrt{c_i}$ for $i\geq 2$. As will be seen in Section \ref{subsubsec:uniform_A}, 
%for $i\geq 1$, $\sqrt{c_i}$ are the radii of $d$-dimensional balls of the form $\left\{\bY:\bY^T\bY\leq c_i\right\}$ which has linear correspondence
%with $d$-dimensional ellipsoids of the form $\{\btheta:(\btheta-\bmu)^T\bSigma^{-1}(\btheta-\bmu)\leq c_i\}$. 
%Hence, we shall refer to $\sqrt{c_i}$; $i\geq 1$, as radii.

%\subsection{Choice of $\bmu$ and $\bSigma$}
%\label{subsec:choice_mu_Sigma}

To obtain reliable estimates of $\bmu$ and $\bSigma$, well-mixing MCMC algorithms may be employed. However, although a Gibbs sampling algorithm 
is available for our DP mixture model, it is difficult to get it converged in practice. To elucidate, note that as $\alpha\rightarrow\infty$,
a simple application of the Borel-Cantelli lemma in conjunction with the Markov inequality shows that $V_i$ converges to $0$ almost surely for each $i$. 
This entails that $w_j$, for $j=1,\ldots,N-1$, converge to zero, almost surely. Hence, conditional on the rest of the
unknowns, $\xi_1=\xi_2=\cdots=\xi_M$. That is, when all the $\xi_i$ are expected to be distinct, there is, in fact, only one common distinct value (or a small number
of distinct values) for these parameters in the relevant Gibbs sampling strategy, when $\alpha$ is large. Although theoretically the Gibbs sampler is still
irreducible, in practice, reliability of the chain in highly compromised, at least in our experience.

We completely avoid the aforementioned problem by implementing transformation based Markov Chain Monte Carlo (TMCMC) of \ctn{Dutta14} instead of the Gibbs sampler. 
In fact, additive TMCMC turned out to be adequate for all the examples that we considered. In our TMCMC algorithm we updated $\bvartheta$ using TMCMC and $\bxi$
by direct simulation from $G_N$. We set $\bmu$ and $\bSigma$ to be the mean and covariance of the TMCMC realizations of $\bvartheta$.

%Theoretically, any choice of $\bmu\in\mathbb R^d$ and any $d\times d$ positive definite scale matrix $\bSigma$ is valid for our method. However,
%judicious choices of $\bmu$ and $\bSigma$ play important roles in the efficiency of our $iid$ algorithm. In this regard,
%in practice, $\bmu$ and $\bSigma$ will be the estimates of the mean (if it exists, or co-ordinate-wise median otherwise) 
%and covariance structure of $\pi$ (it it exists, or some appropriate scale matrix otherwise). These estimates will be assumed to be obtained from some 
%reliably implemented efficient MCMC algorithm. In this regard, we recommend the transformation based Markov Chain Monte Carlo (TMCMC)
%methodology introduced by \ctn{Dutta14} that simultaneously updates all the variables using suitable deterministic transformations of some low-dimensional (often
%one-dimensional) random variable defined on some relevant support. Since the method is driven by the low-dimensional random variable, this leads to
%drastic dimension reduction in effect, with great improvement in acceptance rates and computational efficiency, while ensuring adequate convergence
%properties. Our experiences also reveal that addition of a further deterministic step to TMCMC, in a similar vein as in \ctn{Liu00} often 
%dramatically enhances the convergence properties of TMCMC.

\subsection{Estimation of $\pi(\bB_i\times\bA_i|\by)$}
\label{subsec:mixing_probs}
Recall that the key idea of $iid$ sampling from $\pi(\btheta|\by)$ is to randomly select $\pi_i(\cdot|\by)$ with probability $\pi(\bB_i\times\bA_i|\by)$ 
and then to exactly simulate from $\pi_i(\cdot|\by)$. However, the mixing probabilities $\pi(\bB_i\times\bA_i|\by)$ are not available to us. 
Note that the TMCMC realizations are not useful for estimating these probabilities, since there can be only a finite number of such realizations in practice, 
whereas the number of the mixing probabilities is infinite. 
In this regard, we extend the Monte Carlo based estimation idea of \ctn{Bhatta21a} to suit our purpose,
assuming for the while that an infinite number of parallel processors are available, and that 
the $i$-th processor is used to estimate $\pi(\bB_i\times\bA_i|\by)$ using Monte Carlo sampling up to a constant.
%let the resultant estimate be denoted by $\widehat{\pi(\bA_i)}$. 

To elaborate, let $\pi(\btheta|\by)=C\tilde\pi(\btheta|\by)$, where $\tilde\pi(\btheta|\by)$ is the right hand side of (\ref{eq:dp7}) and 
$C>0$ is the unknown normalizing constant. Then for any Borel set $\bA$ in the Borel $\sigma$-field
of $\mathbb R^d$, letting $\mathcal L(\bA)$ denote the
Lebesgue measure of $\bA$, observe that
\begin{align}
	\pi(\bB_i\times\bA_i|\by)&=C\mathcal L(\bA_i)\int \frac{\tilde\pi(\btheta|\by)}{\prod_{j=1}^MG_N(\xi_j)}
	\frac{1}{\mathcal L(\bA_i)}I_{\bA_i}(\bvartheta)I_{\bB_i}(\bxi)d\bvartheta\prod_{j=1}^MG_N(d\xi_j)\\
	&= C\mathcal L(\bA_i)E\left[\frac{\tilde\pi(\btheta|\by)}{\prod_{j=1}^MG_N(\xi_j)}I_{\bA_i}(\bvartheta)I_{\bB_i}(\bxi)\right],
	\label{eq:mc1}
\end{align}
the right hand side being $C\mathcal L(\bA_i)$ times the expectation of $\frac{\tilde\pi(\btheta|\by)}{\prod_{j=1}^MG_N(\xi_j)}I_{\bA_i}(\bvartheta)I_{\bB_i}(\bxi)$ 
with respect to the uniform distribution of $\bvartheta$ on $\bA_i$ and the distribution $\prod_{j=1}^MG_N(\xi_j)$ of $\bxi$ conditional on $\bvartheta$. This expectation
can be estimated by generating realizations of $\bvartheta$ from the uniform distribution on $\bA_i$, then drawing $\bxi$ from $\prod_{i=1}^MG_N(\xi_i)$
given $\bvartheta$ and subsequently evaluating $\frac{\tilde\pi(\btheta|\by)}{\prod_{j=1}^MG_N(\xi_j)}I_{\bA_i}(\bvartheta)I_{\bB_i}(\bxi)$ for the realizations and taking their
average. 
%Let $\widehat{\tilde\pi(\bA_i)}$ denote the Monte Carlo average times $\mathcal L(\bA_i)$, so that 
%$\widehat{\tilde\pi(\bA_i)}$ is an estimate of $\tilde\pi(\bA_i)$.
For the procedure of uniform sample generation from $\bA_i$ and computation of $\mathcal L(\bA_i)$, see \ctn{Bhatta21a}.

\subsection{Minorization for $\pi_i(\cdot|\by)$}
\label{subsec:minorization}

As in \ctn{Bhatta21a}, here we consider the following uniform independence proposal distribution on $\bA_i$ embedded in a 
Metropolis-Hastings framework for $\pi_i(\cdot|\by)$ to update the entire block $\bvartheta$:
\begin{equation}
	q_i(\bvartheta)=\frac{1}{\mathcal L(\bA_i)}I_{\bA_i}(\bvartheta).
	\label{eq:proposal}
\end{equation}
For further details regarding the usefulness of this proposal, see \ctn{Bhatta21a}.
%Although independence proposal distributions are usually not recommended, particularly in high dimensions, here this makes good sense. Indeed,
%since $\bA_i$ are so constructed that the densities $\pi_i$ are expected to be relatively flat on these sets, 
%the uniform proposal is expected to provide good approximation
%to the target $\pi_i$. Furthermore, as we shall show, the uniform proposal will be responsible for obtaining significantly large lower bound for our
%minorization, which, in turn, will play the defining role in our construction of an efficient perfect sampler for $\pi_i$.

%fixed at their current values. Note that this is equivalent to symmetric random walk MCMC where the co-ordinates are updated successively, one at a time.
%For $\btheta\in\bA_i$, let $P_i(\btheta,\bB\cap\bA_i)$ denote the corresponding symmetric random walk Markov transition probability for $\pi_i$, 
%where $\bB\subseteq\mathbb R^d$ is any Borel set. 
%For $k=1,\ldots,d$, let $\theta'_k=\theta_k+\epsilon_k$, where $\epsilon_k\stackrel{iid}{\sim}U(-a_i,a_i)$, for some $a_i>0$.
%Let $g_k(\theta'_k|\theta_k)$ denote the proposal density of $\theta'_k$ given $\theta_k$, induced by the distribution of $\epsilon_k$. Clearly,
%\begin{equation}
%	g_k(\theta'_k|\theta_k)=\frac{1}{2a}I_{[-a_i+\theta_k,a_i+\theta_k]}(\theta'_k).
%	\label{eq:minor1}
%\end{equation}
For $\btheta\in\bB_i\times\bA_i$, for any Borel set $\mathbb B\times\mathbb A$ in the Borel $\sigma$-field of $\mathbb R^r$, where $r=d+M$, 
let $P_i(\btheta,\mathbb B\cap\bB_i\times\mathbb A\cap\bA_i)$ denote the corresponding Metropolis-Hastings transition probability for $\pi_i(\cdot|\by)$. 
Note that this transition probability is strictly positive only for those $\mathbb B$ such that $\mathbb B\cap\bB_i$ corresponds to
$\mathbb A\cap\bA_i$.
Let $s_i=\underset{\btheta\in\bB_i\times\bA_i}{\inf}~\frac{\tilde\pi(\btheta|\by)}{\prod_{j=1}^MG_N(\xi_j)}$ and 
$S_i=\underset{\btheta\in\bB_i\times\bA_i}{\sup}~\frac{\tilde\pi(\btheta|\by)}{\prod_{j=1}^MG_N(\xi_j)}$. 
Then, with (\ref{eq:proposal}) as the proposal density we have, for any $\btheta\in\bB_i\times\bA_i$: %for $\mathbb B\cap\bB_i$ corresponding to $\mathbb A\cap\bA_i$, 
\begin{align}
	P_i(\btheta,\mathbb B\cap\bB_i\times\mathbb A\cap\bA_i)&\geq\int_{\mathbb B\cap\bB_i\times\mathbb A\cap\bA_i}
	\min\left\{1,\frac{\tilde\pi(\btheta'|\by)}{\tilde\pi(\btheta|\by)}\times\frac{\prod_{j=1}^MG_N(\xi_j)}{\prod_{j=1}^MG_N(\xi'_j)}\right\}q_i(\bvartheta')d\bvartheta'\prod_{j=1}^MG_N(d\xi'_j)\notag\\
	&\geq\left(\frac{s_i}{S_i}\right)\times\frac{\mathcal L(\mathbb A\cap\bA_i)}{\mathcal L(\bA_i)}\times \bG_N\left(\bxi'\in\mathbb B\cap\bB_i\right)\notag\\
	&=p_i~Q_i(\bvartheta'\in\mathbb A \cap\bA_i)\times\bG_N\left(\bxi'\in\mathbb B\cap\bB_i\right),
	\label{eq:minor1}
\end{align}
where $p_i=s_i/S_i$,
%For any desired $p_i\in (0,1)$, let $a_i$ be so chosen that
%\begin{align}
%	&\left(\frac{s_i}{2a_iS_i}\right)^d\mathcal L(\bA_i)=p_i\notag\\
%	&\Leftrightarrow a_i=\left(\frac{\mathcal L(\bA_i)}{p_i}\right)^{1/d}\times\left(\frac{s_i}{2S_i}\right).
%	\label{eq:minor3}
%\end{align}
%Combining (\ref{eq:minor2}) and (\ref{eq:minor3}) we obtain
%\begin{equation}
%	P_i(\btheta,\bB\cap\bA_i)\geq p_i~ Q_i(\bB\cap\bA_i),
%	\label{eq:minor4}
%\end{equation}
\begin{equation*}
	Q_i(\bvartheta'\in\mathbb A\cap\bA_i)=\frac{\mathcal L(\mathbb A\cap\bA_i)}{\mathcal L(\bA_i)}
	%\label{eq:minor2}
\end{equation*}
is the uniform probability measure corresponding to (\ref{eq:proposal}), and
\begin{equation*}
	\bG_N\left(\bxi'\in\mathbb B\cap\bB_i\right)=\int_{\mathbb B\cap\bB_i}\prod_{j=1}^MG_N(d\xi'_j).	
\end{equation*}
%Thus, $Q_i$ is the uniform distribution on $\bA_i$.
Since (\ref{eq:minor1}) holds for all $\btheta\in\bB_i\times\bA_i$, the entire set $\bB_i\times\bA_i$ is a small set.

%Here are some important remarks regarding justification of the choice of the uniform proposal density. First, since the uniform proposal is the independence
%sampler, independent of the previous state of the underlying Markov chain, the right hand side of (\ref{eq:minor1}) is rendered independent of $\btheta$, which
%is a requisite condition for minorization. For proposal distributions of the form $q_i(\btheta'|\btheta)$, infimum of $q_i(\btheta'|\btheta)$ for $\btheta,\btheta'\in\bA_i$
%would be necessary, which would lead to much smaller lower bound for $P_i(\btheta,\mathbb B\cap\bA_i)$ compared to the right hand side of (\ref{eq:minor1}).  
%This is precisely the reason why we do not consider TMCMC for $iid$ sampling, whose move types depend upon the previous state. 
%Any independence proposal density other than the uniform will not get cancelled in the Metropolis-Hastings acceptance ratio and the corresponding $s_i/S_i$ may be 
%vanishingly small if $\tilde\pi$ is not well-approximated by the independence proposal on $\bA_i$. This would make perfect sampling infeasible. 
%Furthermore, as will be seen in Section \ref{subsec:coupling}, for perfect sampling it is required to simulate from the proposal density. 
%But unlike the case of uniform proposal, it may be difficult to simulate from the proposal on $\bA_i$, 
%which has closed ellipsoidal or closed annulus structure.

%Since $s_i$ and $S_i$ associated with (\ref{eq:minor1}) 
%will usually not be available in closed forms, 
Let $\hat s_i$ and $\hat S_i$ denote the minimum and maximum of
$\frac{\tilde\pi(\btheta|\by)}{\prod_{j=1}^MG_N(\xi_j)}$
over the Monte Carlo samples drawn uniformly from $\bB_i\times\bA_i$ in course of estimating $\tilde\pi(\bB_i\times\bA_i|\by)$. 
%by $\widehat{\tilde\pi(\bB_i\times\bA_i|\by)}$.
Then $\frac{s_i}{S_i}\leq\frac{\hat s_i}{\hat S_i}$. Hence, there exists $\eta_i>0$ such that $1\geq\frac{s_i}{S_i}\geq\frac{\hat s_i}{\hat S_i}-\eta_i>0$.
Let $\hat p_i=\frac{\hat s_i}{\hat S_i}-\eta_i$. Then it follows from (\ref{eq:minor1}) that
\begin{equation}
	P_i(\btheta,\mathbb B\cap\bB_i\times\mathbb A\cap\bA_i)\geq \hat p_i~Q_i(\bvartheta'\in\mathbb A \cap\bA_i)\times \bG_N\left(\bxi'\in\mathbb B\cap\bB_i\right),
	\label{eq:minor1_hat}
\end{equation}
which we shall consider for our purpose.
Recall that in practice, $\eta_i$ is expected to be very close to zero, since the Monte Carlo sample size would be sufficiently large. 
Thus, $\hat p_i$ is expected to be very close to $p_i$.

%Note that choosing $a_i$ as per (\ref{eq:minor3}) requires calculation of $\mathcal L(\bA_i)$, the Lebesgue measure of $\bA_i$. It is well-known 
%(see, for example, \ctn{Giraud15}) that the volume of a $d$-dimensional ball of radius $c$ is given by $\frac{\pi^{d/2}}{\Gamma(d/2+1)}c^d$, where
%$\Gamma(x)=\int_0^{\infty}t^{x-1}\exp(-t)dt$, for $x>0$, is the Gamma function. Hence, in our case, 
%$\mathcal L(\bA_1)=\mathcal L(\{\btheta:\|\btheta\|\leq c_1\})=\frac{\pi^{d/2}}{\Gamma(d/2+1)}c^d_1$, and for $i\geq 2$,
%$\bA_i=\{\btheta:c_{i-1}\leq\|\btheta\|\leq c_i\}=\{\btheta:\|\btheta\|\leq c_i\}\setminus \{\btheta:\|\btheta\|\leq c_{i-1}\}$ has Lebesgue measure
%$\mathcal L(\bA_i)=\mathcal L(\{\btheta:\|\btheta\|\leq c_i\})-\mathcal L(\{\btheta:\|\btheta\|\leq c_{i-1}\})=
%\frac{\pi^{d/2}}{\Gamma(d/2+1)}\left(c^d_i-c^d_{i-1}\right)$.

\subsection{Split chain}
\label{subsec:coupling}
Due to the minorization (\ref{eq:minor1_hat}), the following decomposition holds for all $\btheta\in\bB_i\times\bA_i$:
\begin{align}
	P_i(\btheta,\mathbb B\cap\bB_i\times\mathbb A\cap\bA_i)&=\hat p_i~Q_i(\bvartheta'\in\mathbb A \cap\bA_i)\times \bG_N\left(\bxi'\in\mathbb B\cap\bB_i\right)\notag\\
	&\quad+(1-\hat p_i)~R_i(\btheta,\mathbb B\cap\bB_i\times\mathbb A\cap\bA_i),
	\label{eq:split1}
\end{align}
where
\begin{equation}
	R_i(\btheta,\mathbb B\cap\bB_i\times\mathbb A\cap\bA_i)=\frac{P_i(\btheta,\mathbb B\cap\bB_i\times\mathbb A\cap\bA_i)
	-\hat p_i~ Q_i(\bvartheta'\in\mathbb A\cap\bA_i)\times \bG_N\left(\bxi'\in\mathbb B\cap\bB_i\right)}{1-\hat p_i}
	\label{eq:split2}
\end{equation}
is the residual distribution.

Therefore, to implement the Markov chain $P_i(\btheta,\mathbb B\cap\bB_i\times\mathbb A\cap\bA_i)$, rather than proceeding directly with the uniform proposal based 
Metropolis-Hastings algorithm, we can
use the split (\ref{eq:split1}) to generate realizations from $P_i(\btheta,\mathbb B\cap\bB_i\times\mathbb A\cap\bA_i)$. That is, given $\btheta$, we can 
simulate from $Q_i\times\bG_N$ with probability $\hat p_i$, and with the remaining
probability, can generate from $R_i(\btheta,\cdot)$. 
%Generating samples from $Q_i$, that is, uniform simulation on $\bA_i$ is simple.
%Indeed, note that uniform generation from $\bA_1=\{\btheta:\|\btheta\|\leq c_1\}$ consists of generation of $X_k\stackrel{iid}{\sim}N(0,1)$, for $k=1,\ldots,d$, $U\sim U(0,1)$
%and setting $\btheta=c_1U^{1/d}\bX/\|\bX\|$, where $\bX=(X_1,\ldots,X_d)^T$.
%To generate uniformly from $\bA_i=\{\btheta:c_{i-1}\leq\|\btheta\|\leq c_i\}=\{\btheta:\|\btheta\|\leq c_i\}\setminus \{\btheta:\|\btheta\|\leq c_{i-1}\}$ for $i\geq 2$, 
%we shall continue uniform simulation from $\{\btheta:\|\btheta\|\leq c_i\}$ until $\btheta$ satisfies $\|\btheta\|\geq c_{i-1}$.

To simulate from the residual density $R_i(\btheta,\cdot)$ we devise the following rejection sampling scheme. 
Let $\tilde R_i(\btheta,\btheta')$ and $\tilde P_i(\btheta,\btheta')$ %and $\tilde Q_i(\btheta')$ 
denote the densities of $\btheta'$ corresponding to
$R_i(\btheta,\cdot)$ and $P_i(\btheta,\cdot)$, %and $Q_i(\cdot)$, 
respectively. Then it follows from (\ref{eq:split1}) and (\ref{eq:split2}) that for all $\btheta\in\bB_i\times\bA_i$,
\begin{align*}
	&\tilde R_i(\btheta,\btheta')=\frac{\tilde P_i(\btheta,\btheta')-\hat p_i~ q_i(\bvartheta')\prod_{j=1}^MG_N(\xi'_j)}{1-\hat p_i}\leq \frac{\tilde P_i(\btheta,\btheta')}{1-\hat p_i}\notag\\
	&\Leftrightarrow \frac{\tilde R_i(\btheta,\btheta')}{\tilde P_i(\btheta,\btheta')}\leq \frac{1}{1-\hat p_i},~\mbox{for all}~\btheta'\in\bB_i\times\bA_i.
	%\label{eq:rs1}
\end{align*}
Hence, given $\btheta$ we can continue to simulate $\btheta'\sim \tilde P_i(\btheta,\cdot)$ using the uniform proposal distribution (\ref{eq:proposal}) 
and generate $U\sim U(0,1)$ until 
\begin{equation}
U<\frac{(1-\hat p_i)\tilde R_i(\btheta,\btheta')}{\tilde P_i(\btheta,\btheta')}
	\label{eq:rejection_sampling2}
\end{equation}
is satisfied, at which point we accept $\btheta'$ as a realization from $\tilde R_i(\btheta,\cdot)$.

Now 
\begin{align}
	\tilde P_i(\btheta,\btheta')&=q_i(\bvartheta')\prod_{j=1}^MG_N(\xi'_j)
	\times\min\left\{1,\frac{\tilde\pi(\btheta'|\by)}{\tilde\pi(\btheta|\by)}\times\frac{\prod_{j=1}^MG_N(\xi_j)}{\prod_{j=1}^MG_N(\xi'_j)}\right\}
	+r_i(\btheta)I_{\btheta}(\btheta')\notag\\
	&=\frac{1}{\mathcal L(\bA_i)}\prod_{j=1}^MG_N(\xi'_j)\times\min\left\{1,\frac{\tilde\pi(\btheta'|\by)}{\tilde\pi(\btheta|\by)}
	\times\frac{\prod_{j=1}^MG_N(\xi_j)}{\prod_{j=1}^MG_N(\xi'_j)}\right\}+r_i(\btheta)I_{\btheta}(\btheta'),\notag
	%\label{eq:mc_kernel1}
\end{align}
where
\begin{align}
	r_i(\btheta)&=1-\int_{\bB_i\times\bA_i}\min\left\{1,\frac{\tilde\pi(\btheta'|\by)}{\tilde\pi(\btheta|\by)}
	\times\frac{\prod_{j=1}^MG_N(\xi_j)}{\prod_{j=1}^MG_N(\xi'_j)}\right\}
	q_i(\bvartheta')d\bvartheta'\prod_{j=1}^MG_N(d\xi'_j)\notag\\
	&= 1-\int_{\bB_i\times\bA_i}\min\left\{1,\frac{\tilde\pi(\btheta'|\by)}{\tilde\pi(\btheta|\by)}
	\times\frac{\prod_{j=1}^MG_N(\xi_j)}{\prod_{j=1}^MG_N(\xi'_j)}\right\}\frac{1}{\mathcal L(\bA_i)}d\bvartheta'\prod_{j=1}^MG_N(d\xi'_j).
	\label{eq:mc_kernel2}
\end{align}
Let $\hat r_i(\btheta)$ denote the Monte Carlo estimate of $r_i(\btheta)$ obtained by simulating $\bvartheta'$ from the uniform distribution on $\bA_i$,
$\bxi'$ from $\prod_{j=1}^MG_N$ given $\bvartheta'$ and finally
taking the average of 
$\min\left\{1,\frac{\tilde\pi(\btheta'|\by)}{\tilde\pi(\btheta|\by)}\times\frac{\prod_{j=1}^MG_N(\xi_j)}{\prod_{j=1}^MG_N(\xi'_j)}\right\}$ 
in (\ref{eq:mc_kernel2}).
In our implementation, we shall consider the following:
\begin{align}
	\hat P_i(\btheta,\btheta')&=\frac{1}{\mathcal L(\bA_i)}\prod_{j=1}^MG_N(\xi'_j)\times\min\left\{1,\frac{\tilde\pi(\btheta'|\by)}{\tilde\pi(\btheta|\by)}
	\times\frac{\prod_{j=1}^MG_N(\xi_j)}{\prod_{j=1}^MG_N(\xi'_j)}\right\}+\hat r_i(\btheta)I_{\btheta}(\btheta'),~\mbox{and}\notag\\
	%\label{eq:mc_kernel3}\\
	\hat R_i(\btheta,\btheta')&=\frac{\hat P_i(\btheta,\btheta')-\hat p_i~ q_i(\bvartheta')\prod_{j=1}^MG_N(\xi'_j)}{1-\hat p_i}.\notag
	%\label{eq:mc_kernel4}
\end{align}
%Given $\btheta$ and any $\epsilon>0$, let the Monte Carlo sample size be large enough such that 
%\begin{align}
%	&	1-\epsilon\leq \frac{\tilde P_i(\btheta,\btheta')}{\hat P_i(\btheta,\btheta')}\leq 1+\epsilon,~\mbox{and}\notag\\
%	&1-\epsilon\leq \frac{\tilde R_i(\btheta,\btheta')}{\hat R_i(\btheta,\btheta')}\leq 1+\epsilon.\notag
%\end{align}
%Then
%\begin{equation}
%	\left(\frac{1-\epsilon}{1+\epsilon}\right)\frac{\tilde R_i(\btheta,\btheta')}{\tilde P_i(\btheta,\btheta')}
%	\leq\frac{\hat R_i(\btheta,\btheta')}{\hat P_i(\btheta,\btheta')}\leq \left(\frac{1+\epsilon}{1-\epsilon}\right)
%	\frac{\tilde R_i(\btheta,\btheta')}{\tilde P_i(\btheta,\btheta')}.
%	\label{eq:bound1}
%\end{equation}
%Indeed, due to (\ref{eq:bound1}), 
In all practical implementations, for sufficiently large Monte Carlo sample size, (\ref{eq:rejection_sampling2}) holds if and only if 
\begin{equation}
U<\frac{(1-\hat p_i)\hat R_i(\btheta,\btheta')}{\hat P_i(\btheta,\btheta')}
	\label{eq:rejection_sampling3}
\end{equation}
holds; see \ctn{Bhatta21a} for details. Consequently, as in \ctn{Bhatta21a}, 
we shall carry out our implementations with (\ref{eq:rejection_sampling3}).

\subsection{Perfect sampling from $\pi_i(\cdot|\by)$}
\label{subsec:perfect}
%For perfect simulation we exploit the following idea first presented in \ctn{Murdoch98}.
From (\ref{eq:split1}) it follows that (see \ctn{Bhatta21a}) at any given positive time $T_i=t$, $\btheta'$ will be drawn
from $Q_i\times\bG_N$ with probability $\hat p_i$. %Also observe that $T_i$ can not take the value $0$, since only initialization is done at the zero-th time point. 
Hence, the distribution of $T_i$ is geometric, having the form
\begin{equation}
P(T_i=t)=\hat p_i (1-\hat p_i)^{t-1};~t=1,2,\ldots.
	\label{eq:geo}
\end{equation}
%This is in contrast with the form of the geometric distribution 
%provided in \ctn{Murdoch98}, which gives positive probability to $T_i=0$ (in our context, 
%$P(T_i=t)=\hat p_i (1-\hat p_i)^{t};~t=0,1,2,\ldots$, with respect to their specification). Indeed, all our implementations %, detailed below,
%yielded correct $iid$ simulations only for the form (\ref{eq:geo}) and failed to provide correct answers for that of \ctn{Murdoch98}.

Due to (\ref{eq:geo}), first $T_i$ can be drawn from the geometric distribution and then one may simulate $\btheta^{(-T_i)}\sim Q_i\times\bG_N$. 
Subsequently, the chain only
needs to be carried forward in time till $t=0$, using $\btheta^{(t+1)}=\varrho_i(\btheta^{(t)},\bU^{(t+1)}_i)$,
where $\varrho_i(\btheta^{(t)},\bU^{(t+1)}_i)$ is the deterministic function corresponding to the simulation
of $\btheta^{(t+1)}$ from $\tilde R_i(\btheta^{(t)},\cdot)$.
Here $\{\bU^{(t)}_i;t=0,-1,-2,\ldots\}$ is an appropriate sequence of random numbers assumed to be available before beginning the perfect
sampling implementation. The realization $\btheta^{(0)}$ obtained at time $t=0$ is a perfect draw
from $\pi_i$. %For details on implementation of this procedure, see \ctn{Bhatta21a}.

In practice, storing the uniform random numbers $\{\bU^{(t)}_i;t=0,-1,-2,\ldots\}$ or explicitly considering the deterministic relationship 
$\btheta^{(t+1)}=\varrho_i(\btheta^{(t)},\bU^{(t+1)}_i)$,
are not required. These would be required only if we had taken the search approach, namely, iteratively starting the Markov chain at all initial values at negative times
and carrying the sample paths to zero.

The complete algorithm for $iid$ sample generation from $\pi(\cdot|\by)$ is of the same form as Algorithm 1 of \ctn{Bhatta21a}, and hence we do not provide
the explicit algorithm here.

\subsection{Diffeomorphism}
\label{subsec:diffeo}

It is obvious that small values of $\hat p_i$ would lead to large values of $T_i$, which would make the perfect sampling algorithm inefficient. 
To solve this problem, \ctn{Bhatta21a} proposed inversion of a diffeomorphism proposed
in \ctn{Johnson12a} to flatten
the posterior distribution in a way that its infimum and the supremum are reasonably close (so that $\hat p_i$ are adequately large) on all the relevant
ellipsoids and annuli. %sets $\bB_i\times\bA_i$.
%This requires an appropriate flattening bijective transformation. As it turns out, the MCMC literature already contains instances of such transformations
%with stronger properties such that the transformation and its inverse are continuously differentiable. Such transformations are called diffeomorphisms.
%In fact, diffeomorphisms with stronger properties have been considered in \ctn{Johnson12a}; see also \ctn{Dey16} who implement such diffeomorphisms in the TMCMC context.
%However, in the above works, the transformations are meant to reduce thick-tailed target distributions to thin-tailed ones. 
%Here we need just the opposite; the Rongelap Island posterior
%needs to be converted to a thick-tailed distribution for our $iid$ sampling purpose. As can be anticipated, the inverse transformations for thick to thin tail
%conversions will be appropriate here. Details follow.

%In contrast with \ctn{Johnson12a}, our purpose is to convert the doubly intractable target distribution $\pi(\cdot|\by)$ 
%to some thick-tailed distribution $\pi_{\bgamma}(\cdot|\by)$ whose supremum
%and infimum are close. %so that $\hat p_i$ are significantly large. 
%This suggests application of the transformation $\bgamma=h(\btheta)$, the inverse of the transformation considered in \ctn{Johnson12a}.  
%Consequently, the density of $\bgamma$ becomes
%\begin{align}
%	\pi_{\bgamma}(\bgamma|\by)=\pi\left(h^{-1}(\bgamma)|\by\right)\left|\mbox{det}~\nabla h(\bgamma)\right|^{-1},
%\label{eq:transformed_target2}
%\end{align}
%where $h$ is the same as (\ref{eq:isotropy}) and $f$ is given by (\ref{eq:diffeo2}).
The issue of small values of $\hat p_i$ persists even the current DP mixture context, and hence the diffeomorphism fix is again of great value. 
Here it is of interest to render the posterior $\pi(\bxi,\bvartheta|\by)$ thick-tailed using the inverse of the diffeomorphic transformation of \ctn{Johnson12a}. 
Note, however, that since  $\bxi$ depends directly on $\bvartheta$ through $G_N$, it is sufficient to
consider the inverse diffeomorphic transformation for $\bvartheta$ only. 
%Although \ctn{Johnson12a} introduced this diffeomorphic transformation idea, their goal
%was to transform thick-tailed densities to thin-tailed ones, whereas our objective is to render the original posterior of interest $\pi(\bxi,\bvartheta|\by)$ thick-tailed,
%so that the infimum and supremums $s_i$ and $S_i$ are close. 
%Hence, we consider the inverse of the transformation of \ctn{Johnson12a} for our purpose. 

Thus, setting $\bgamma=h^{-1}(\bvartheta)$, where $h$ is a diffeomorphism, the density of
$(\bxi,\bgamma)$ is given by
\begin{align}
	\pi_{\bxi,\bgamma}(\bxi,\bgamma|\by)=\pi\left(\bxi,h^{-1}(\bgamma)|\by\right)\left|\mbox{det}~\nabla h(\bgamma)\right|^{-1}
\label{eq:transformed_target}
\end{align}
%is the density of $\bgamma=h^{-1}(\btheta)$,
where $\nabla h(\bgamma)$ denotes the gradient of $h$ at $\bgamma$ and $\mbox{det}~\nabla h(\bgamma)$ 
stands for the determinant of the gradient of $h$ at $\bgamma$. 
The details of the transformation are provided below.

As in \ctn{Bhatta21a}, here we consider the  %obtain conditions on $h$ which make $\pi_{\bgamma}$ super-exponentially light.
%Specifically, they 
following isotropic function $h:\mathbb R^d\mapsto\mathbb R^d$ of \ctn{Johnson12a}:
\begin{equation}
	h(\bgamma)=\left\{\begin{array}{cc}f(\|\bgamma\|)\frac{\bgamma}{\|\bgamma\|}, & \bgamma\neq \bzero\\
		0, & \bgamma=\bzero,
\end{array}\right.
\label{eq:isotropy}
\end{equation}
for some function $f: (0,\infty)\mapsto (0,\infty)$, $\|\cdot\|$ being the Euclidean norm.
\ctn{Johnson12a} consider isotropic diffeomorphisms, that is, functions of the form $h$ where 
both $h$ and $h^{-1}$ are continuously differentiable, with the further property that 
$\mbox{det}~\nabla h$ and  $\mbox{det}~\nabla h^{-1}$ are also continuously 
differentiable. Specifically, they define %if $\pi$ is only sub-exponentially light, then the following form of 
$f:[0,\infty)\mapsto [0,\infty)$ given by
\begin{equation}
f(x)=\left\{\begin{array}{cc}e^{bx}-\frac{e}{3}, & x>\frac{1}{b}\\
x^3\frac{b^3e}{6}+x\frac{be}{2}, & x\leq \frac{1}{b},
\end{array}\right.
\label{eq:diffeo2}
\end{equation}
where $b>0$. 
%ensures that the transformed density $\pi_{\bgamma}$ of the form (\ref{eq:transformed_target}),
%is super-exponentially light.

We apply the same transformation to the uniform proposal density (\ref{eq:proposal}), so that the new proposal density becomes
\begin{equation}
	q_i(\bgamma)=\frac{1}{\mathcal L(\bA_i)}I_{\bA_i}(h^{-1}(\bgamma))\left|\mbox{det}~\nabla h(\bgamma)\right|^{-1}.
	\label{eq:proposal2}
\end{equation}
%With (\ref{eq:proposal2}) as the proposal density for (\ref{eq:transformed_target2}), the proposal will not cancel in the acceptance ratio
%of the Metropolis-Hastings acceptance probability. 

Now, for any set $\bA$, let $h_{\bvartheta}(\bA)=\left\{h(\bvartheta):\bvartheta\in\bA\right\}$ and for any set $\bB$, let
$h_{\bxi}(\bB)=\left\{h(\bxi):\bxi\in\bB\right\}$.
Also, let $s_i=\underset{\bxi\in h(\bB_i),\bgamma\in h(\bA_i)}{\inf}~\frac{\tilde\pi_{\bxi,\bgamma}(\bxi,\bgamma|\by)}{q_i(\bgamma)\prod_{j=1}^MG_N(\xi_j)}$ and 
$S_i=\underset{\bxi\in h(\bB_i),\bgamma\in h(\bA_i)}{\sup}~\frac{\tilde\pi_{\bxi,\bgamma}(\bxi,\bgamma|\by)}{q_i(\bgamma)\prod_{j=1}^MG_N(\xi_j)}$, 
where $\tilde\pi_{\bxi,\bgamma}(\bxi,\bgamma|\by)$ 
is the same as (\ref{eq:transformed_target})
but without the normalizing constant.
Then, with (\ref{eq:proposal2}) as the proposal density, we have 
\begin{align}
	&P_i((\bxi,\bgamma),h_{\bxi}(\mathbb B\cap\bB_i)\times h_{\bvartheta}(\mathbb A\cap\bA_i))\notag\\
	&\geq\int_{h_{\bxi}(\mathbb B\cap\bB_i)\times h_{\bvartheta}(\mathbb A\cap\bA_i)}
	\min\left\{1,\frac{\tilde\pi_{\bxi,\bgamma}(\bxi',\bgamma'|\by)}{\tilde\pi_{\bxi,\bgamma}(\bxi,\bgamma|\by)}
	\times\frac{q_i(\bvartheta)}{q_i(\bvartheta')}\times\frac{\prod_{j=1}^NG_N(\xi_j)}{\prod_{j=1}^NG_N(\xi'_j)}\right\}
	q_i(\bgamma')d\bgamma'\prod_{j=1}^MG_N(d\xi'_j)\notag\\
	%&\geq\left(\frac{s_i}{S_i}\right)\times\frac{\mathcal L(\mathbb B\cap\bA_i)}{\mathcal L(\bA_i)}\notag\\
	&\geq p_i~Q_i\left(h_{\bvartheta}(\mathbb A \cap\bA_i)\right)\times\bG_N\left(h_{\bxi}(\mathbb B\cap\bB_i)\right),\notag
\end{align}
where $p_i=s_i/S_i$, $Q_i$ is the probability measure corresponding to (\ref{eq:proposal2}) and $\bG_N$ is given by
\begin{equation*}
	\bG_N\left(\bxi'\in h_{\bxi}(\mathbb B\cap\bB_i)\right)=\int_{h_{\bxi}(\mathbb B\cap\bB_i)}\prod_{j=1}^MG_N(d\xi'_j).	
\end{equation*}
With $\hat p_i=\hat s_i/\hat S_i-\eta_i$, where $\hat s_i$ and $\hat S_i$ are Monte Carlo estimates of $s_i$ and $S_i$ and $\eta_i>0$ is adequately small, 
the rest of the details remain the same as before with necessary modifications pertaining to the new proposal density (\ref{eq:proposal2}) and the new 
Metropolis-Hastings acceptance ratio
with respect to (\ref{eq:proposal2}) incorporated in the subsequent steps. Once $\bgamma$ is generated from (\ref{eq:transformed_target}) we
transform it back to $\bvartheta$ using $\bvartheta=h^{-1}(\bgamma)$ and accordingly reset the values of $\bxi$.

\section{Applications}
\label{sec:applications}
We now illustrate our $iid$ sampling idea on posterior DP mixture of normal mixture models with unknown but bounded number of components
with application to the well-studied enzyme, acidity and the galaxy data sets. \ctn{Richardson97} and \ctn{Das19}
modeled these data sets using parametric normal mixtures and applied reversible jump Markov chain Monte Carlo and transdimensional
transformation based Markov chain Monte Carlo, respectively, for Bayesian inference.

On the other hand, \ctn{Bhattacharya08} modeled these data using %DP mixture of mixure normal densities with bounded number of components 
the DP mixture of the form given in Section \ref{sec:dp_bounded} with an $M$-component mixture of normal densities.
In other words, $f(\cdot|\xi_j)$ is taken as the density of $N(\nu_j,\sigma^2_j)$, the normal distribution with mean $\nu_j$ and variance $\sigma^2_j$, 
the latter primarily parameterized by $\lambda_j=\sigma^{-2}_j$.
Further, he set $\pi_j=1/M$, for $j=1,\ldots,M$; this choice may be advantageous in real data setups, as aptly demonstrated in \ctn{Majumdar13}.
Integrating out $G$, \ctn{Bhattacharya08} arrived at a P\'{o}lya-urn scheme, which he used to construct a Gibbs sampler for Bayesian inference. 

For our illustration, we consider the same model and priors as \ctn{Bhattacharya08} but implement the $iid$ sampling method for the three aforementioned datasets.
It is to be noted that the P\'{o}lya-urn based Gibbs sampling procedure will not serve our purpose of estimating $\bmu$ and $\bSigma$ needed for $\bA_i$,
as the random measure $G$ is integrated out. Indeed, recall from Section \ref{subsec:choice_sets} that $\bA_i$ are based upon $\bvartheta$, 
which includes parameters associated with $G$. In the same section we argued that Gibbs sampling including $G_N$ is laden with difficulties, and that such
difficulties can be completely bypassed using TMCMC, which we employ and generally recommend for estimating $\bmu$ and $\bSigma$.

All our $iid$ simulations are based on the diffeomorphic transformation detailed in Section \ref{subsec:diffeo}, since without this substantially large values
of $\hat p_i$ could not be ensured.

All our codes are written in C using the Message
Passing Interface (MPI) protocol for parallel processing. We implemented our codes on a
80-core VMWare provided by Indian Statistical Institute. The machine has 2 TB memory
and each core has about 2.8 GHz CPU speed.

Below we provide details on $iid$ sampling for the three datasets, along with comparisons with TMCMC.

\subsection{Enzyme data}
\label{subsection:enzyme}
This dataset concerns the distribution of enzymatic activity in the
blood, for an enzyme involved in the metabolism of carcinogenic substances,
among a group of $n=245$ unrelated individuals. We model this data using normal mixture of a maximum of $M=30$ components, where the parameters are assumed
to arise from $G_N$, with $N=50$. The choice of $M$ is the same as in \ctn{Bhattacharya08}, \ctn{Das19}, \ctn{Sabya12}, \ctn{Richardson97}, while that of $N$ 
is based upon Remark \ref{remark:truncation}. %the upper bound in Theorem \ref{theorem:truncation}. 
%Note that with $M=30$ and $N=50$, for $\alpha=3$ for instance, 
%the upper bound is given by $4M\exp\left\{-(N-1)/\alpha\right\}=9.676\times 10^{-6}$, 
%whereas for the traditional DP mixture model, the corresponding upper bound of \ctn{Ishwaran01} is $4n\exp\left\{-(N-1)/\alpha\right\}=7.902\times 10^{-5}$. 
%Thus, the upper bound for our model is an order of magnitude smaller than for the traditional DP mixture.

As in \ctn{Bhattacharya08}, we assume that under $G_0$, $\tau_j\sim\mathcal G(s/2,S/2)$ and given $\tau_j$, $\nu_j\sim N\left(\nu_0,\frac{c}{\tau_j}\right)$, 
where for $a>0$, $b>0$, $\mathcal G(a,b)$ stands for the gamma distribution with mean $a/b$ and
variance $a/b^2$, and $c>0$ is an appropriate constant. In this example, following \ctn{Bhattacharya08} we set $s=4$, $S=2\times(0.2/1.22)=0.328$, $\nu_0=1.45$, $c=33.3$
For the prior of $\alpha$ we considered $\mathcal G\left(a_{\alpha},b_{\alpha}\right)$ with $a_{\alpha}=2$ and $b_{\alpha}=4$, as in \ctn{Bhattacharya08}.

To estimate $\bmu$ and $\bSigma$ for $\bA_i$, we implemented additive TMCMC with scaling constants in the additive transformation for $\bvartheta$ set to $\sqrt{0.5}$,
while $\bxi$ are simulated from $\prod_{j=1}^MG_N$, given $\bvartheta$.
We discarded the first $10^6$ iterations as burn-in and stored one in $100$ iterations in the next $10^6$ iterations, to yield $10,000$ realizations for our purpose.
This exercise took $25$ minutes on a single core.

To complete specification of $\bA_i$, we set $\sqrt{c_1}=8.0$ and $\sqrt{c_i}=\sqrt{c_1}+0.0005\times (i-1)$, for $i=1,\ldots,10^4$. These choices ensured 
adequate coverage of the parameter space of $\bvartheta$ and significantly large values of $\hat p_i$ as we 
chose the diffeomorphism parameter $b=0.01$. We sampled $5000$ Monte Carlo realizations uniformly from $\bA_i$ to reliably estimate the corresponding probabilities
and to compute $\hat s_i$ and $\hat S_i$; we set $\eta_i=10^{-10}$. In the Monte Carlo context, we replaced the computationally inefficient 
rejection sampling method of uniformly sampling from $\bA_i$ with the efficient algorithm proposed in \ctn{Bhatta21a}, completely bypassing rejection sampling. 

With these, we simulated $10,000$ $iid$ realizations from the posterior $\pi(\bxi,\bvartheta|\by)$ on $80$ cores, which took $19$ minutes. Using these $iid$ realizations
we obtained the key results
presented diagrammatically in Figure \ref{fig:enzyme}.
Panel (a) of Figure \ref{fig:enzyme} compares the posterior predictive densities obtained using TMCMC and $iid$ sampling, showing that they are almost identical
and well-capture the details of the histogram of the observed data.  
Panel (b) compares $20$ times pointwise posterior predictive variances associated with panel (a) computed using TMCMC and $iid$ realizations. Although $iid$-based variances 
are expected to be non-negligibly larger than those based on TMCMC, here they are only
slightly larger than those of TMCMC, in spite of scaling up by $20$. The reason for such close agreement between $iid$ and TMCMC realizations is excellent
mixing of the TMCMC chain, as summarized by the typical autocorrelation plots of $\nu_{30}$ and $\tau_{30}$, provided in panels (c) and (d), respectively.  

Letting $K$ denote the number of mixture components, with respect to $iid$ sampling, the sample-based posterior probabilities of $K=2$, $3$, $4$, $5$, $6$ are
$0.2758$, $0.4462$, $0.2280$, $0.0461$, $0.0039$, respectively and zero for the other values of $K$. On the other hand, the TMCMC based posterior probabilities
of the same values of $K$ are $0.2743$, $0.4468$, $0.2313$, $0.0443$, $0.0033$ and zero for the other values of $K$. Thus, a strong agreement is exhibited between
$iid$ sampling and TMCMC even with respect to the posterior of $K$.
\begin{figure}
	\centering
	\subfigure [TMCMC and $iid$-based posterior predictive density for enzyme.]{ \label{fig:enzyme1}
	\includegraphics[width=7.5cm,height=7.5cm]{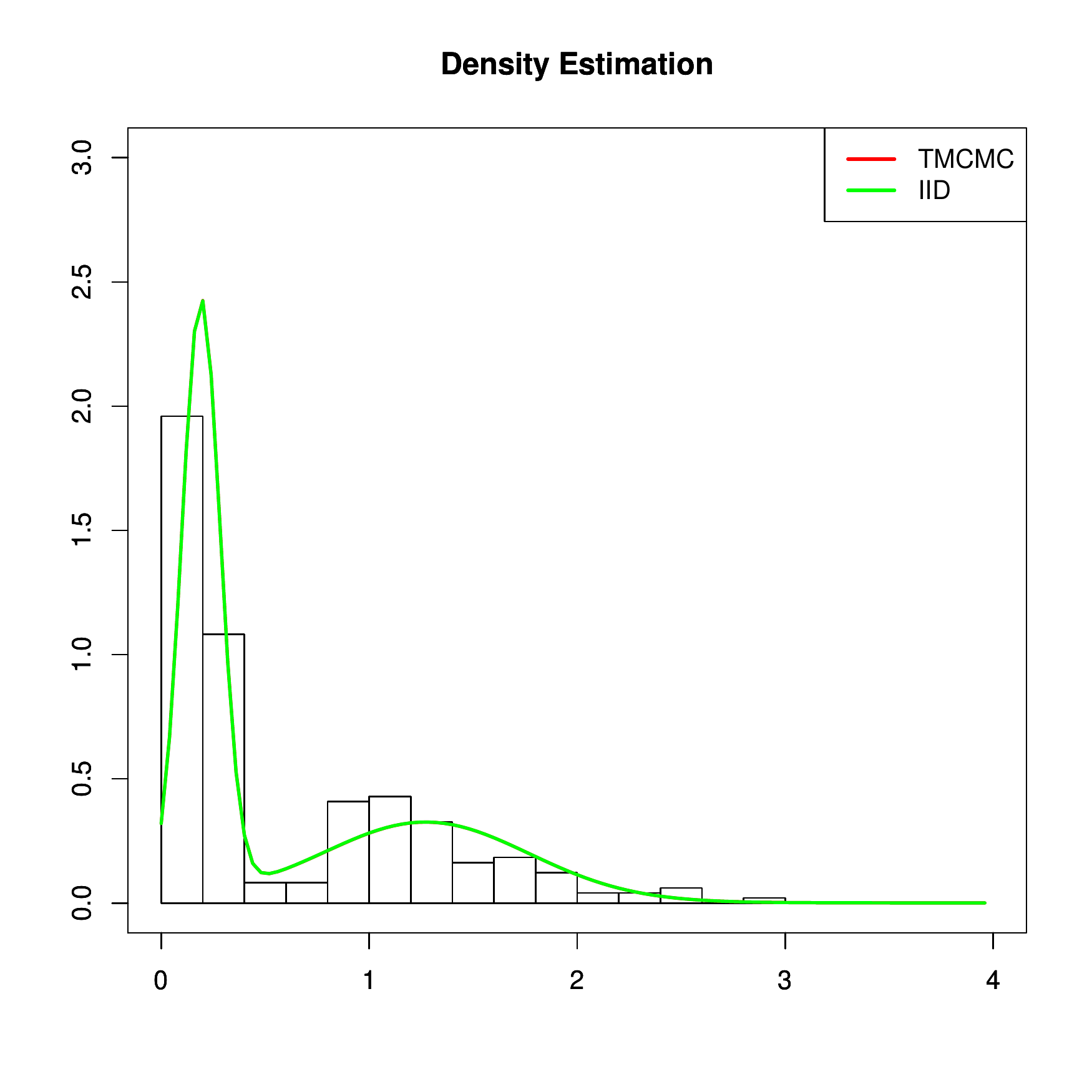}}
	\hspace{2mm}
	\subfigure [Poinwise variances with TMCMC and $iid$ sampling.]{ \label{fig:enzyme2}
	\includegraphics[width=7.5cm,height=7.5cm]{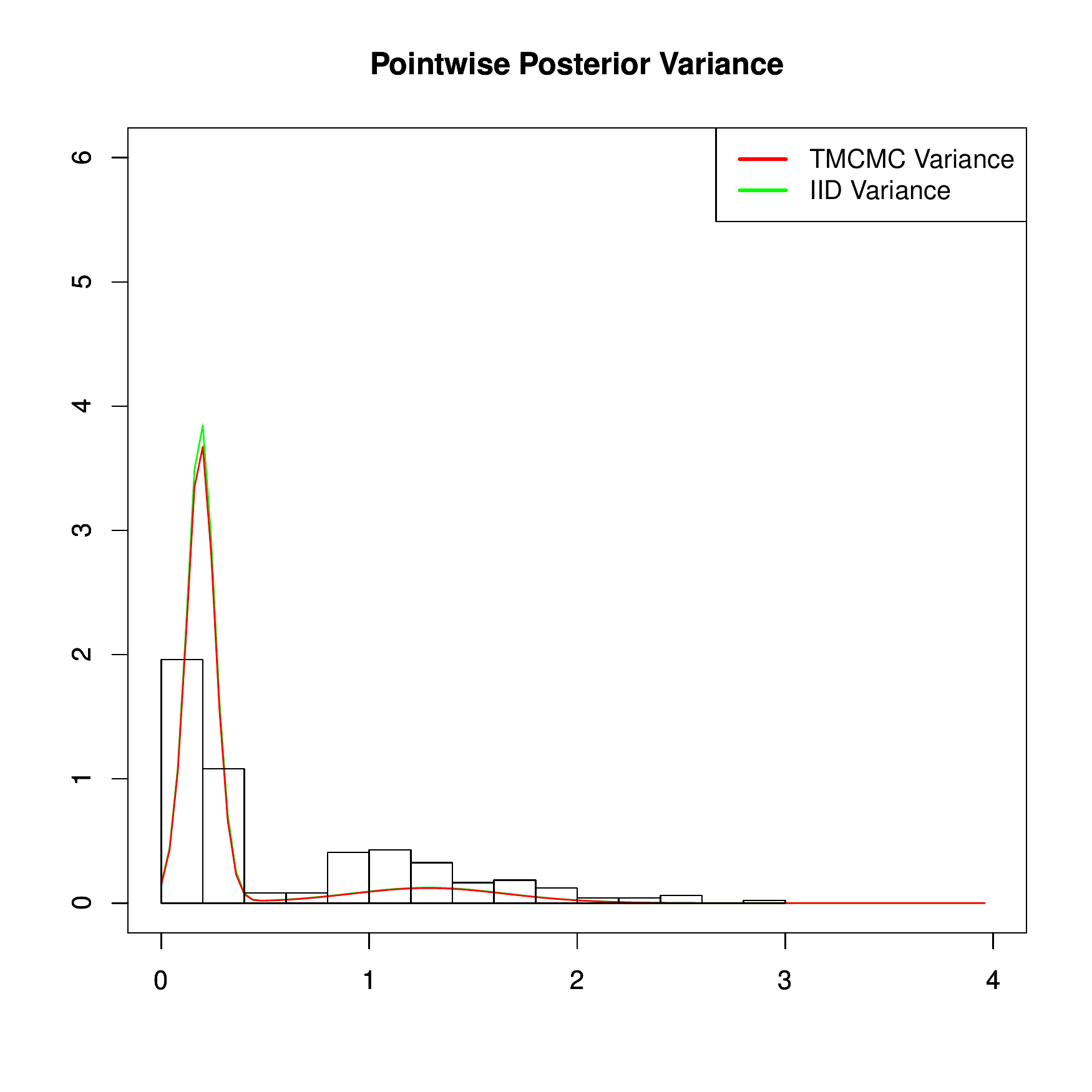}}\\
	\vspace{2mm}
	\subfigure [TMCMC autocorrelation plot for $\nu_{30}$.]{ \label{fig:enzyme3}
	\includegraphics[width=7.5cm,height=7.5cm]{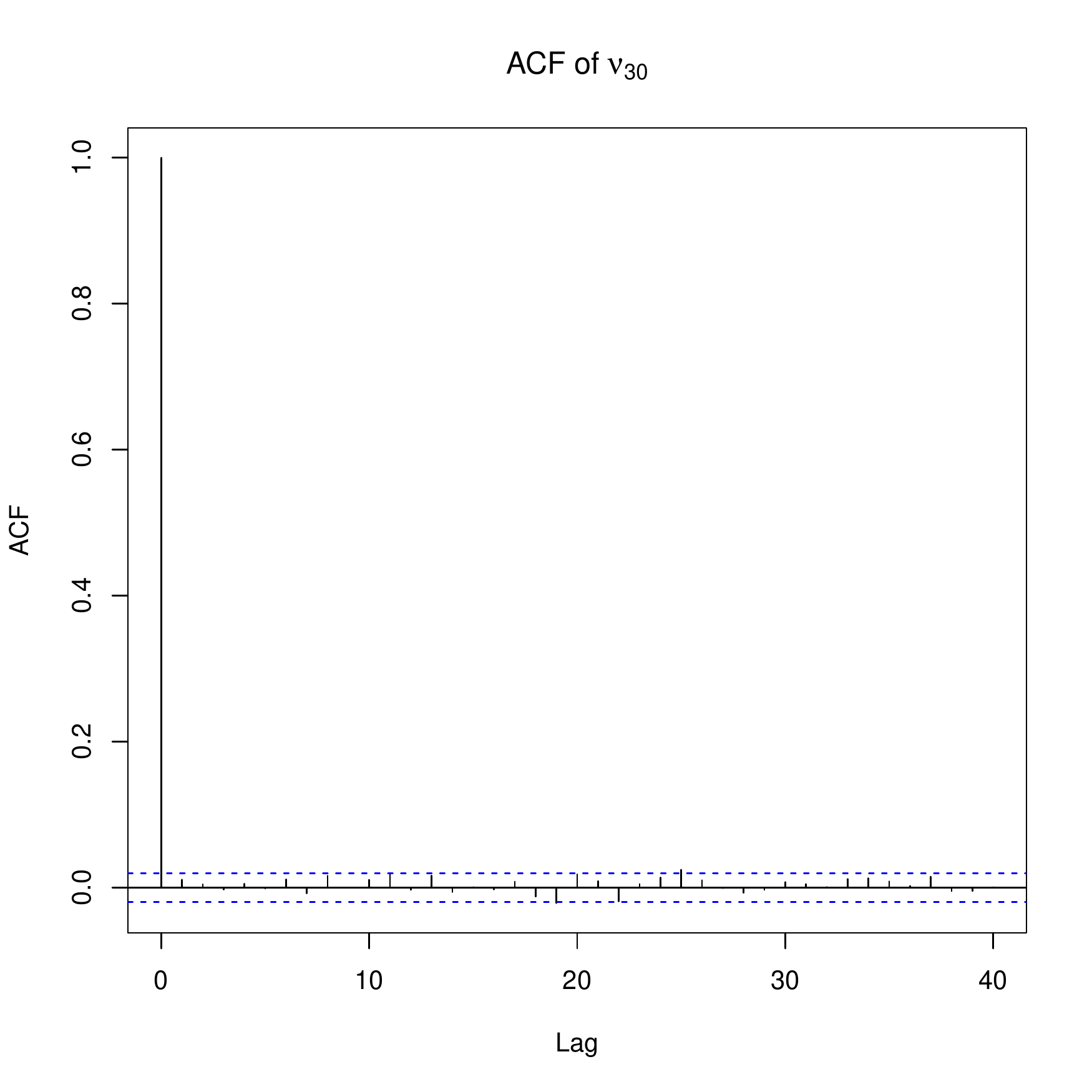}}
	\vspace{2mm}
	\subfigure [TMCMC autocorrelation plot for $\tau_{30}$.]{ \label{fig:enzyme4}
	\includegraphics[width=7.5cm,height=7.5cm]{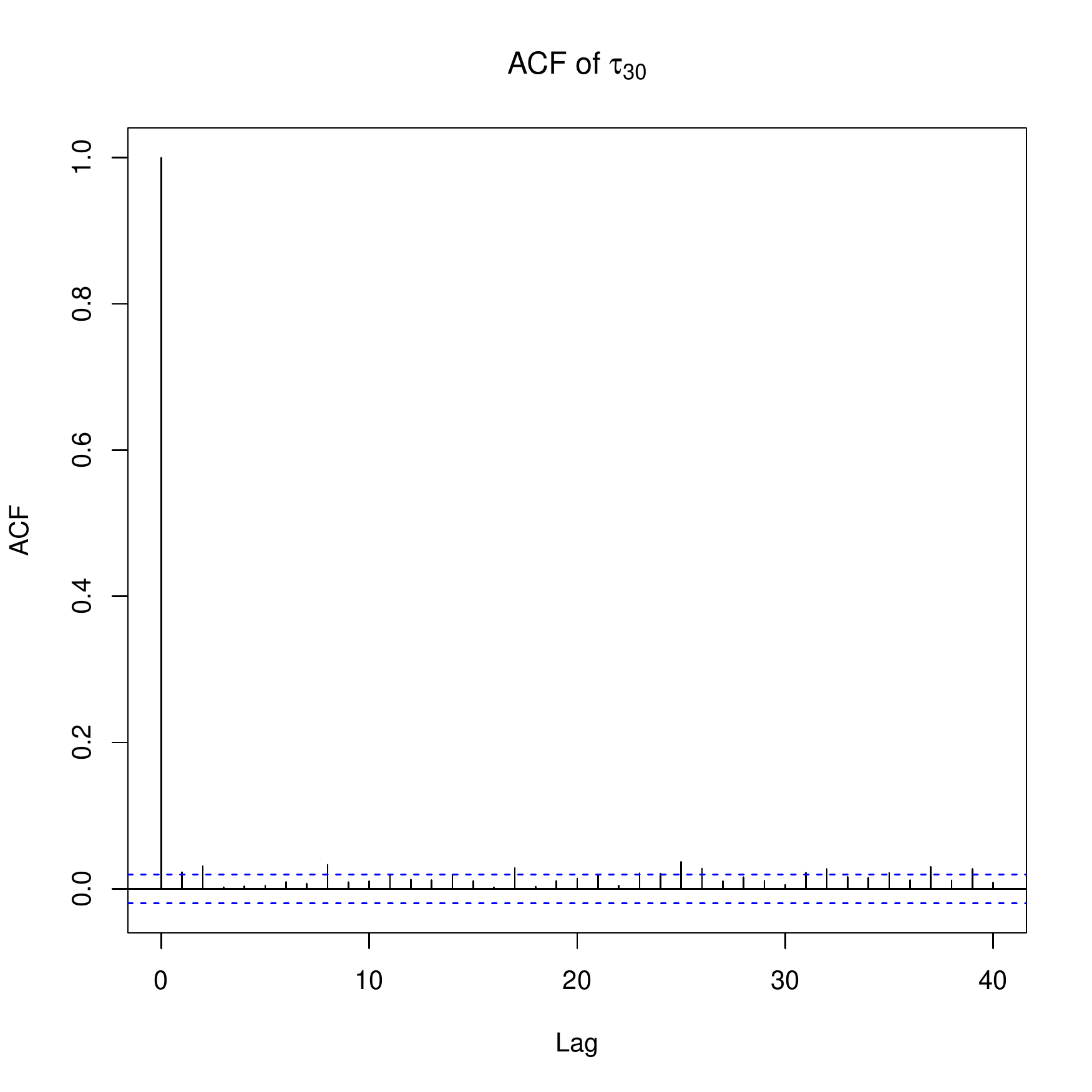}}
	\caption{IID and TMCMC sampling for the enzyme data.}
	\label{fig:enzyme}
\end{figure}

\subsection{Acidity data}
\label{subsection:acidity}
The acidity data set is on an acidity index measured in a sample of $n=155$ lakes in north-central Wisconsin.
With the same model and prior structure as for the enzyme data, here we set $s=4$, $S=2\times(0.2/0.573)=0.698$, $\nu_0=5.02$, $c=33.3$, $a_{\alpha}=2$, $b_{\alpha=4}$,
$M=30$ and $N=50$.

The TMCMC details remain essentially the same as in the enzyme case. Only here we discarded the first $1.5\times 10^6$ iterations as burn-in and stored one in $150$ iterations 
in the next $1.5\times 10^6$ iterations, to yield $10,000$ realizations. The scaling constants for additive transformation for $\bvartheta$ here are $\sqrt{0.05}$.
This exercise took $28$ minutes on a single core.

The $iid$ sampling details are also essentially the same as in the enzyme data; only here  
we set $\sqrt{c_1}=7.0$ and $\sqrt{c_i}=\sqrt{c_1}+0.0005\times (i-1)$, for $i=1,\ldots,10^4$.  
On $80$ cores, the time taken is only $8$ minutes to generate $10,000$ $iid$ realizations.

Figure \ref{fig:acidity} presents the results of $iid$ sampling for the acidity data, along with comparison with TMCMC. Panel (a) shows close agreement
between $iid$ and TMCMC sampling, but not as close as for enzyme. Indeed, panel (b) shows that the TMCMC based pointwise posterior predictive variances, multiplied by $38$, 
are uniformly non-negligibly smaller than those based on $iid$ realizations. The reason for this difference can be attributed to the TMCMC autocorrelations. Although
the location parameters have negligible autocorrelations, exemplified by $\nu_{30}$, shown in panel (c), the scale parameters $\tau$ do not have negligible autocorrelations
for many lags, as shown in panel (d) for $\tau_{30}$ as an instance.

Here the number of components $K=2$, $3$, $4$, $5$ has the empirical posterior probabilities $0.7810$, $0.2111$, $0.0078$, $0.0001$ and zero
for other values of $K$ with respect to $iid$ sampling and 
$0.7289$, $0.2590$, $0.0121$, $0.0000$ and zero for other values of $K$ with respect to TMCMC, which are not in disagreement.
\begin{figure}
	\centering
	\subfigure [TMCMC and $iid$-based posterior predictive density for acidity.]{ \label{fig:acidity1}
	\includegraphics[width=7.5cm,height=7.5cm]{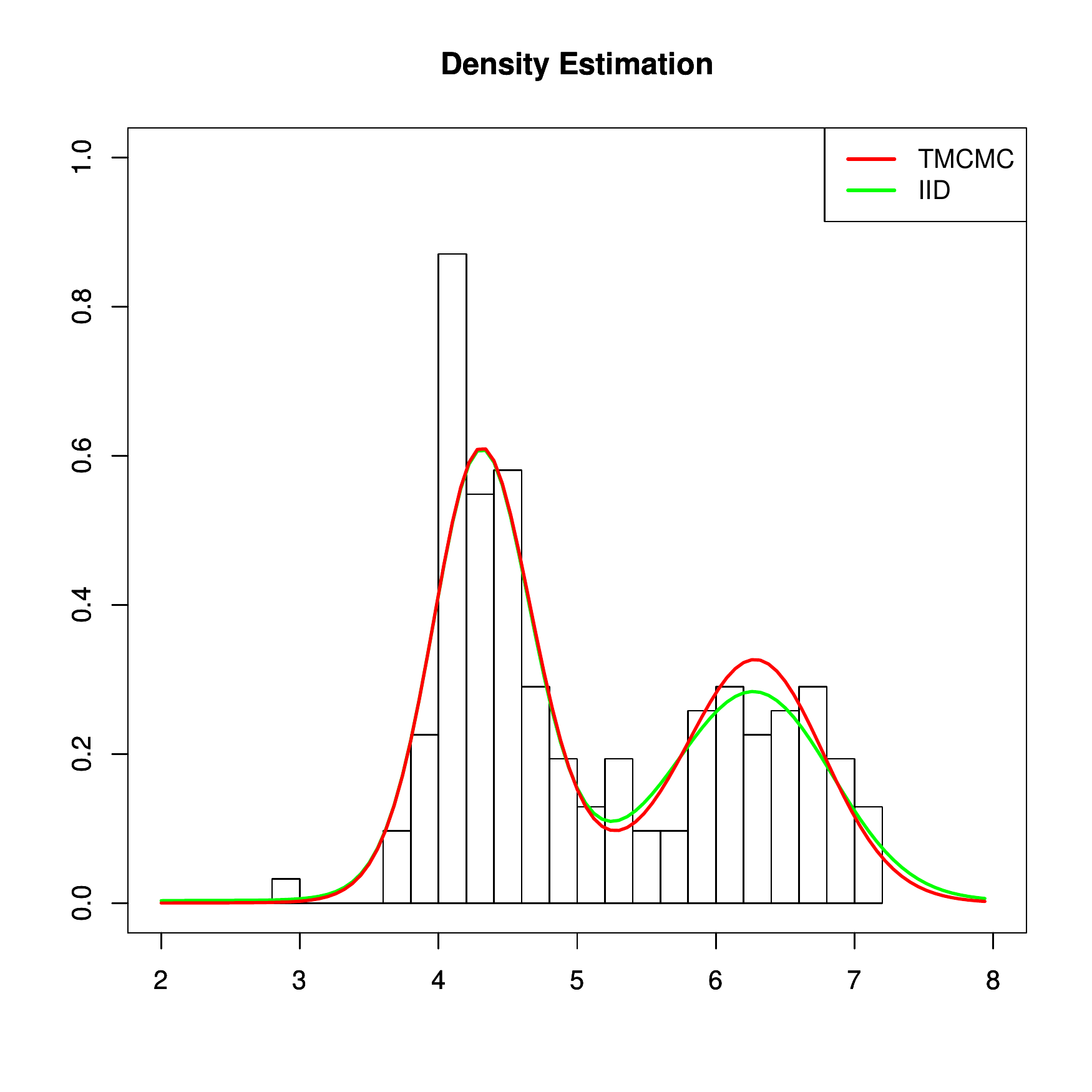}}
	\hspace{2mm}
	\subfigure [Poinwise variances with TMCMC and $iid$ sampling.]{ \label{fig:acidity2}
	\includegraphics[width=7.5cm,height=7.5cm]{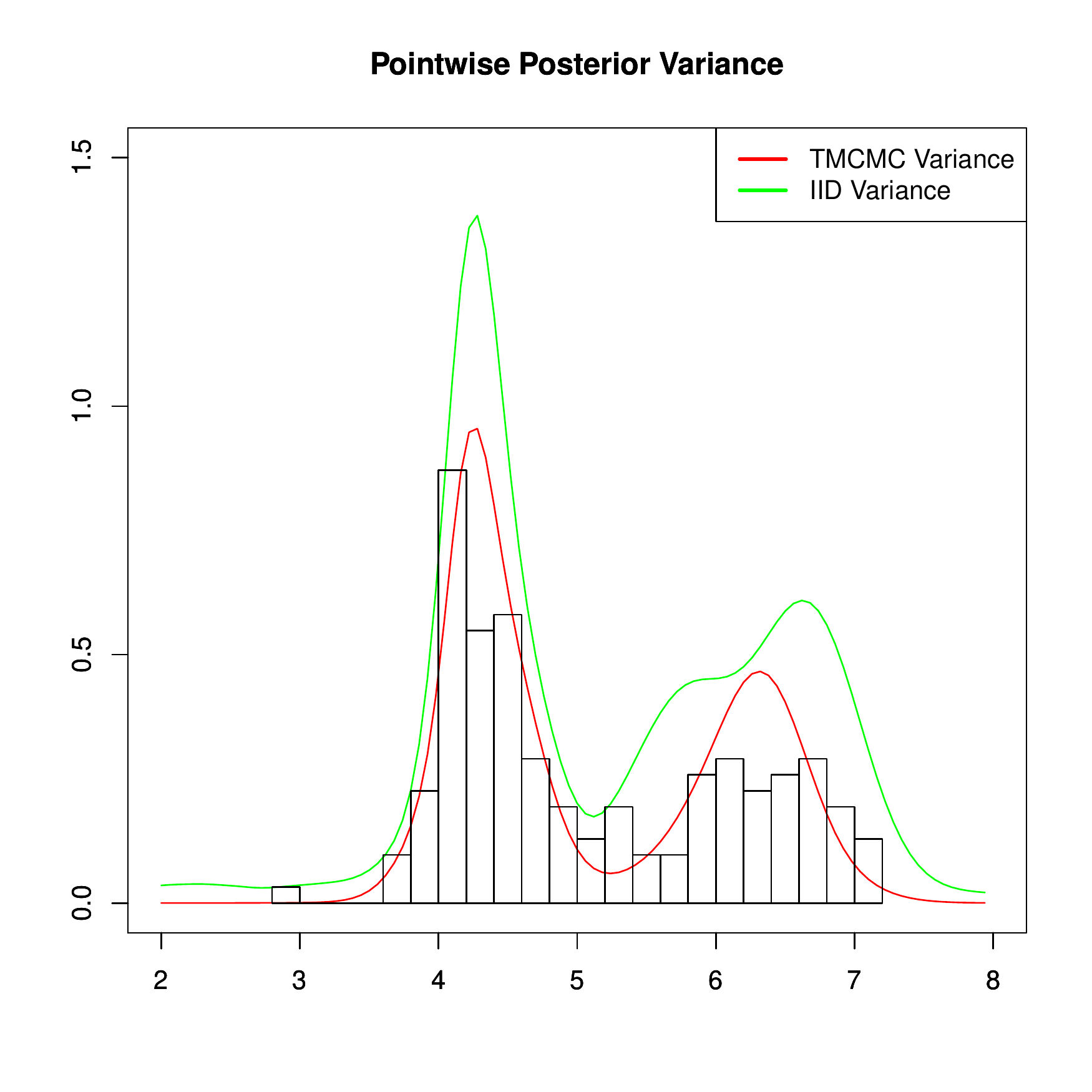}}\\
	\vspace{2mm}
	\subfigure [TMCMC autocorrelation plot for $\nu_{30}$.]{ \label{fig:acidity3}
	\includegraphics[width=7.5cm,height=7.5cm]{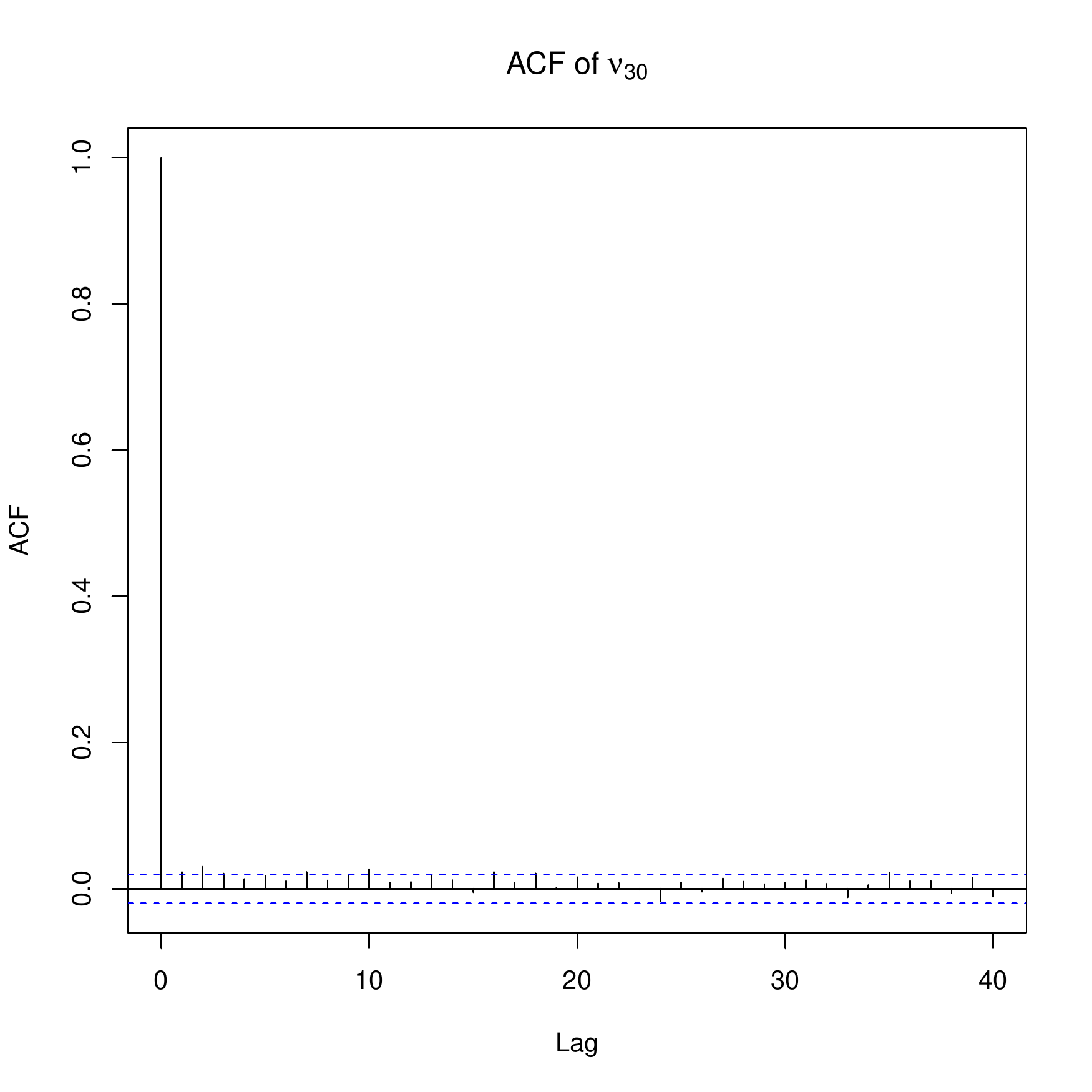}}
	\vspace{2mm}
	\subfigure [TMCMC autocorrelation plot for $\tau_{30}$.]{ \label{fig:acidity4}
	\includegraphics[width=7.5cm,height=7.5cm]{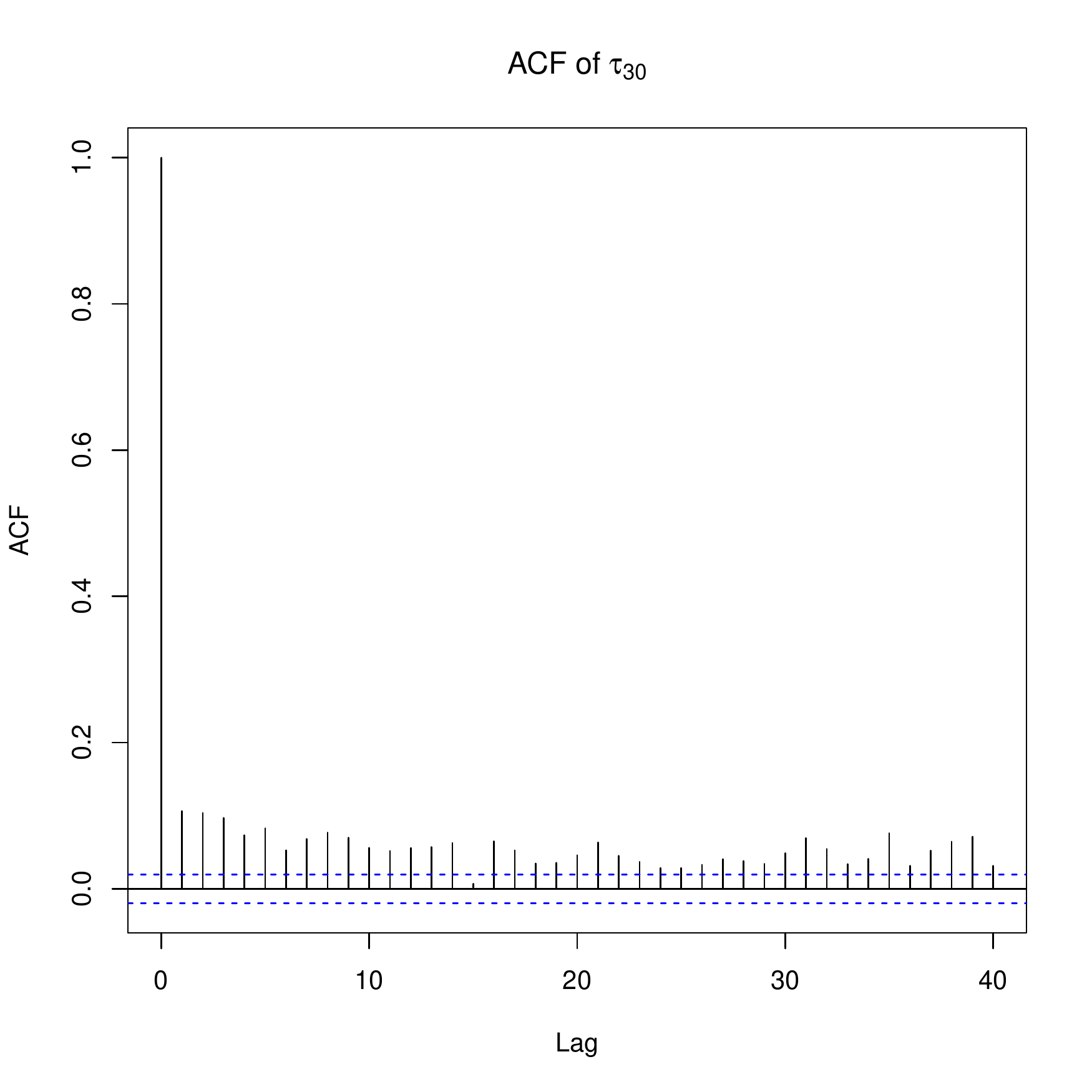}}
	\caption{IID and TMCMC sampling for the acidity data.}
	\label{fig:acidity}
\end{figure}

\subsection{Galaxy data}
\label{subsection:galaxy}
The galaxy data consists of the velocities of $n=82$ distant galaxies, diverging from our own galaxy.
With the same model and prior structure as before, here we set $s=4$, $S=2$, $\nu_0=20$, $c=33.3$, $a_{\alpha}=2$, $b_{\alpha=4}$,
$M=30$ and $N=50$.

The TMCMC details here are the same as in the acidity case, except that here we set the scaling constants for additive transformation of $\bvartheta$ to be $1$.
The time taken is $21$ minutes on a single core for this TMCMC algorithm for the galaxy data.

The $iid$ sampling details are essentially the same as the previous two examples, except that here 
$\sqrt{c_1}=9.0$ and $\sqrt{c_i}=\sqrt{c_1}+0.0005\times (i-1)$, for $i=1,\ldots,10^4$ and the diffeomorphism parameter is $b=0.001$.
The time taken for generating $10,000$ $iid$ realizations is only $5$ minutes on our $80$ cores.

Figure \ref{fig:galaxy} presents the results of $iid$ sampling and TMCMC for the galaxy data. Here again panel (a) shows close agreement
between $iid$ and TMCMC sampling; the only slight disagreement being at the left-most mode. 
Panel (b) shows that the TMCMC based pointwise posterior predictive variances, again multiplied by $38$, 
are non-negligibly smaller than those based on $iid$ realizations, except at a few points in the left-most modal region. 
The reason for this difference can be attributed to the TMCMC autocorrelations. Although
both location and scale parameters seem to have small autocorrelations, shown in panels (c) and (d), these are of course somewhat high in comparison with 
the $iid$ case where no autocorrelation is present, and have hence contributed to the slight disagreement in panel (a).

The posterior probabilities of the number of components $K=1$, $2$, $3$, $4$, $5$ are $0.0265$, $0.2725$, $0.4994$, $0.1965$, $0.0051$ and zero for other values
of $K$ with respect to the $iid$ sampling procedure and those with respect to TMCMC are $0.0229$, $0.2517$, $0.5185$, $0.2045$, $0.0024$ and zero for other values of $K$. 
That is, with respect to the number of components as well, the posterior probabilities are in agreement.
\begin{figure}
	\centering
	\subfigure [TMCMC and $iid$-based posterior predictive density for galaxy.]{ \label{fig:galaxy1}
	\includegraphics[width=7.5cm,height=7.5cm]{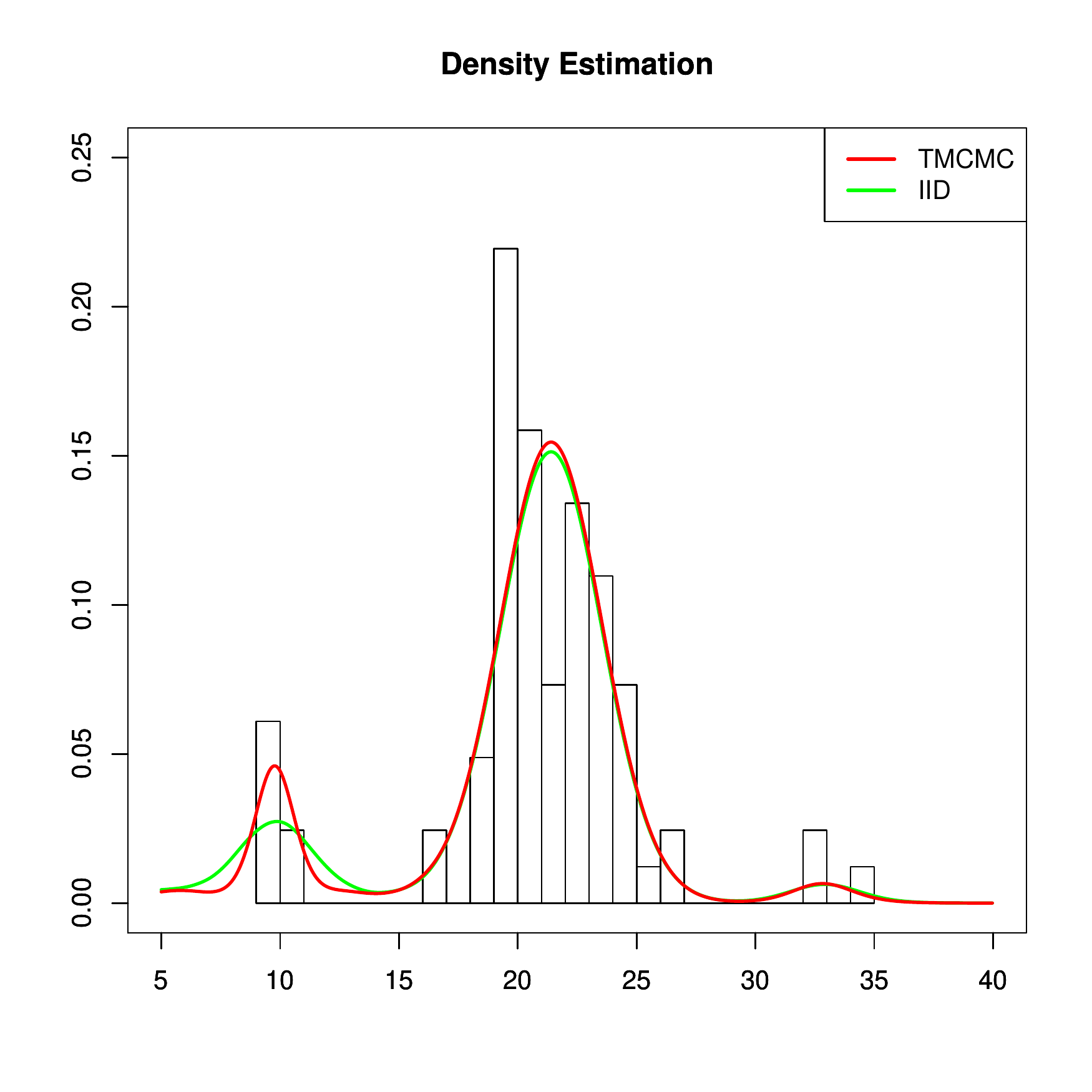}}
	\hspace{2mm}
	\subfigure [Poinwise variances with TMCMC and $iid$ sampling.]{ \label{fig:galaxy2}
	\includegraphics[width=7.5cm,height=7.5cm]{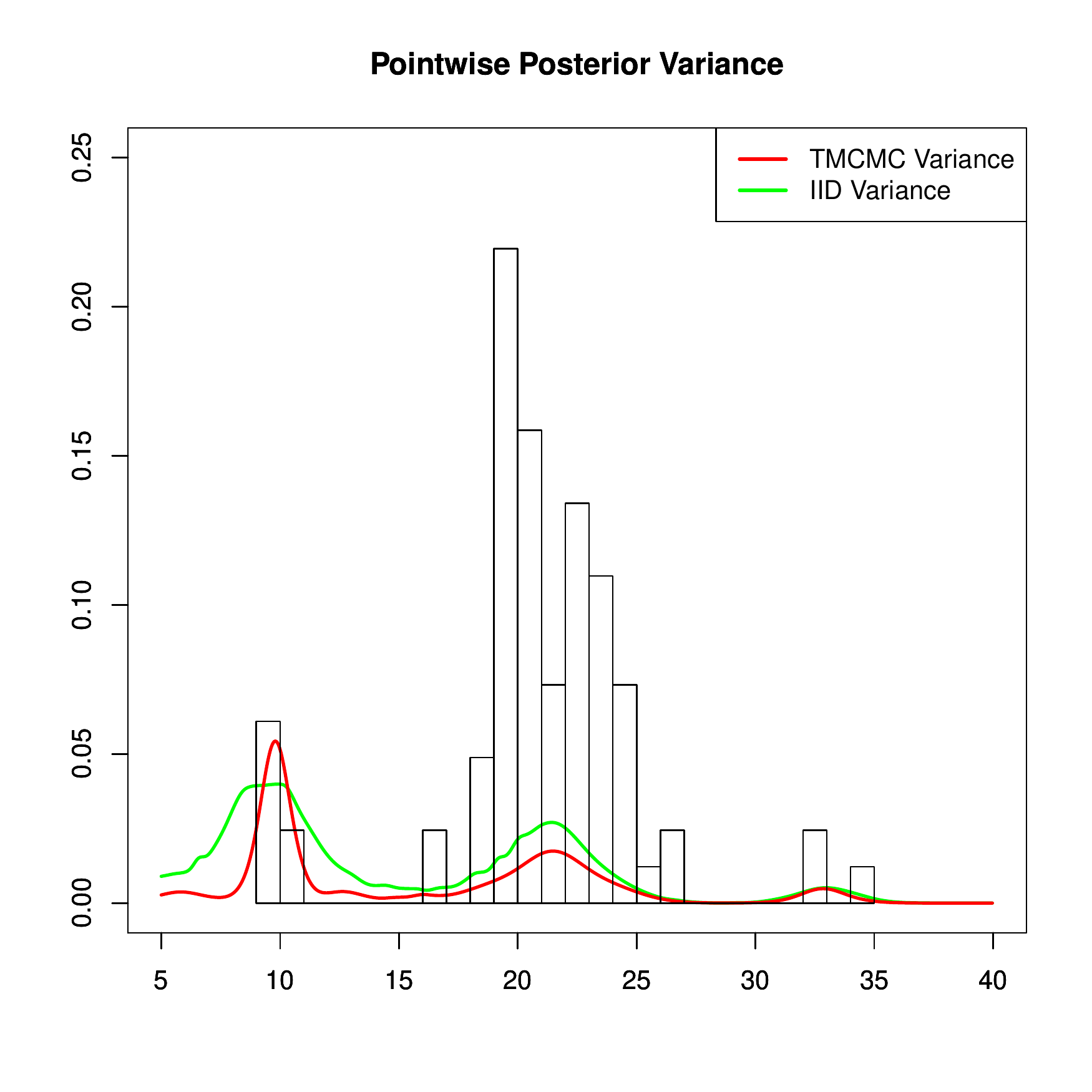}}\\
	\vspace{2mm}
	\subfigure [TMCMC autocorrelation plot for $\nu_{30}$.]{ \label{fig:galaxy3}
	\includegraphics[width=7.5cm,height=7.5cm]{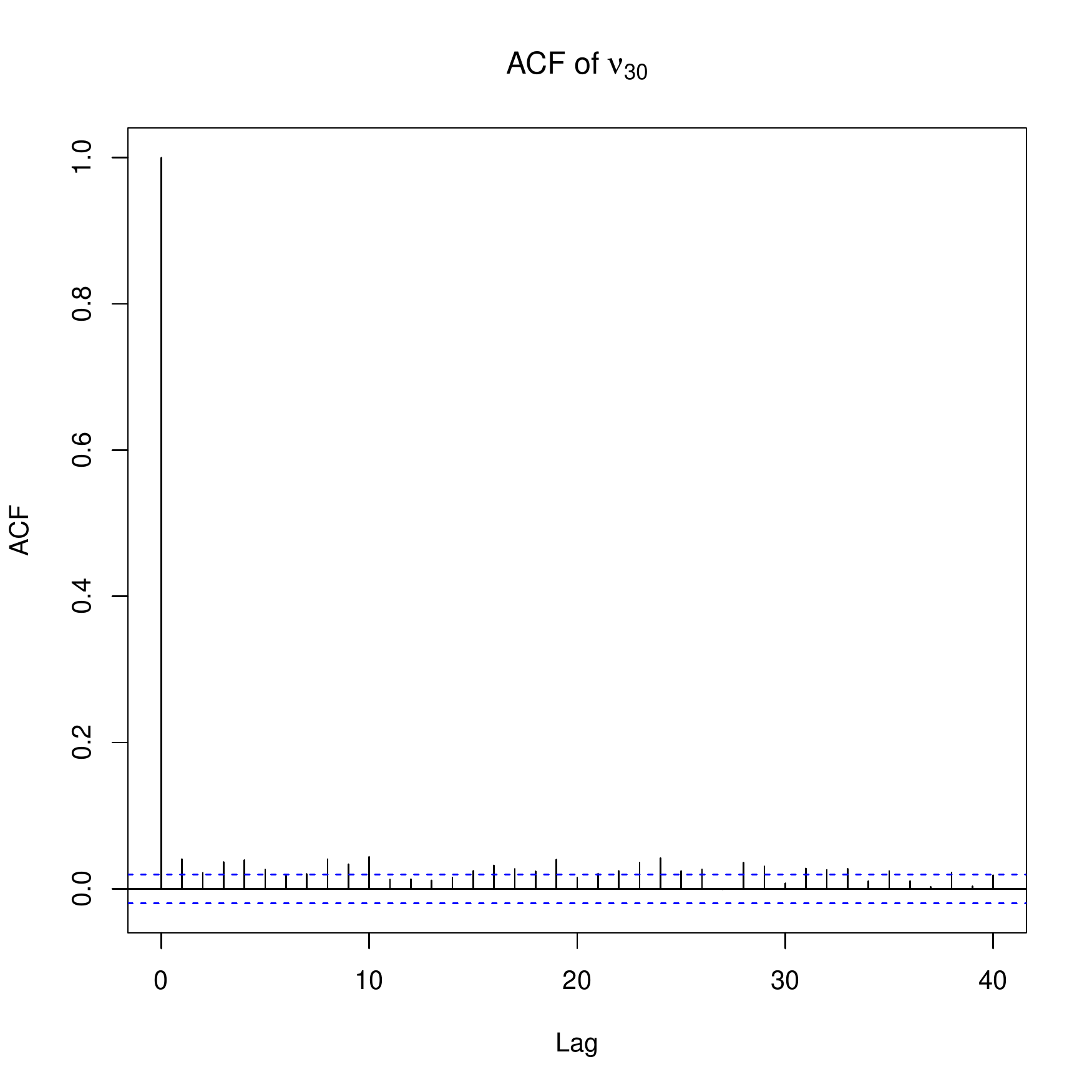}}
	\vspace{2mm}
	\subfigure [TMCMC autocorrelation plot for $\tau_{30}$.]{ \label{fig:galaxy4}
	\includegraphics[width=7.5cm,height=7.5cm]{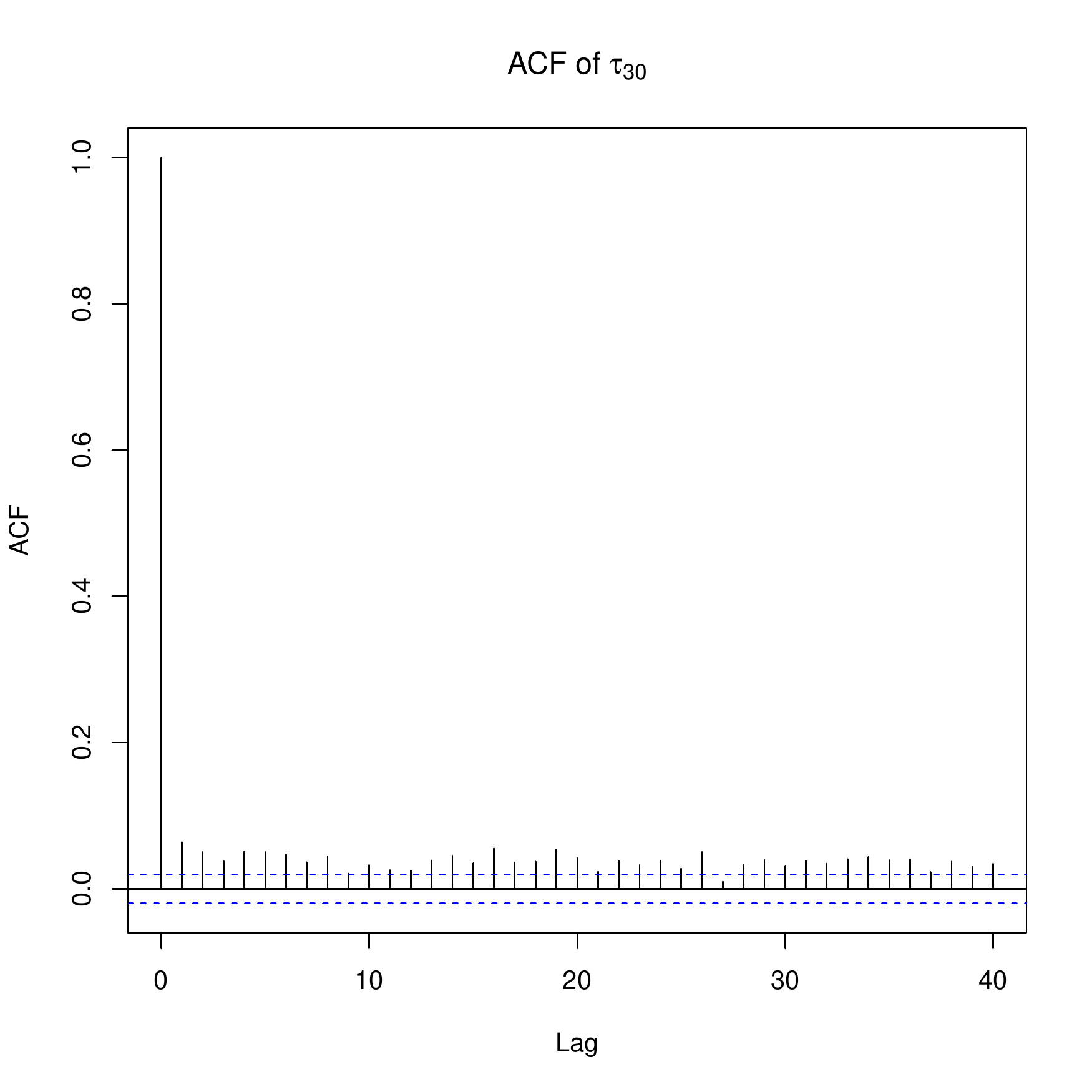}}
	\caption{IID and TMCMC sampling for the galaxy data.}
	\label{fig:galaxy}
\end{figure}

Now, we anticipate that there may arise the question that if smooth density estimators as shown in panel (a) of Figure \ref{fig:galaxy} reflect a model that
fails to capture the minor details of the histogram. Our response would be that the purpose of model-based analysis is to smooth the histogram, and capturing
minor details may be artifacts of the method employed to implement the model. To demonstrate, we implement our additive TMCMC algorithm once again for the galaxy data,
but now with the scaling constants set to $\sqrt{0.006}$. The corresponding TMCMC based density estimate, pointwise posterior predictive variances 
with respect to TMCMC and the autocorrelation plots are provided 
in Figure \ref{fig:galaxy_tmcmc}.
Panel (a) shows that the posterior predictive density based on the TMCMC realizations captures all the minor details of the histogram, and panel (b) shows
that the pointwise posterior predictive variances based on this TMCMC algorithm are much larger compared to panel (b) of Figure \ref{fig:galaxy}. Although the location parameters
do not exhibit substantial autocorrelations, as exemplified by panel (c), the scale parameters have high autocorrelations which refuse to die down even at lag $40$.

Such high autocorrelations are, in fact, responsible for the high pointwise posterior predictive variances of panel (b) and the deceptively accurate density estimate
of panel (a). The latter warrants further explanation. Note that high autocorrelation of $\tau_k$, for any $k=1,\ldots,30$, implies that the realizations of $\tau_k$ 
are not much different from each other. Hence, the correlation between $\tau_j$ and $\tau_k$, for $j\neq k$, will tend to be close to zero. This would effectively imply
many distinct $\tau_k$, which would enforce the same number of distinct $\nu_k$. The square roots of the inverse of these $\tau_k$ act as bandwidths for the density estimation,
and so there would be many distinct locations and the corresponding bandwidths. Together they reach out to every minor bump of the histogram and create the 
impression of great accuracy of the resultant density estimate. As we argued, such accuracy is nothing but an artifact of poor mixing of TMCMC taking small steps 
in each iteration, and hence must be considered spurious. Hence, Figure \ref{fig:galaxy} and not Figure \ref{fig:galaxy_tmcmc}, represents the correct Bayesian inference.
Also note that the posterior probability of the number of components $K=2$, $3$, $4$, $5$, $6$, $7$, $8$ here are $0.0021$, $0.0414$, $0.1988$, $0.3688$, $0.3020$, 
$0.0860$, $0.0009$, respectively.
Thus, this TMCMC algorithm supports more components than the correct $iid$ method or the efficient TMCMC method, which is in keeping with the above discussion
with respect to autocorrelations.
\begin{figure}
	\centering
	\subfigure [TMCMC-based posterior predictive density for galaxy.]{ \label{fig:galaxy_tmcmc1}
	\includegraphics[width=7.5cm,height=7.5cm]{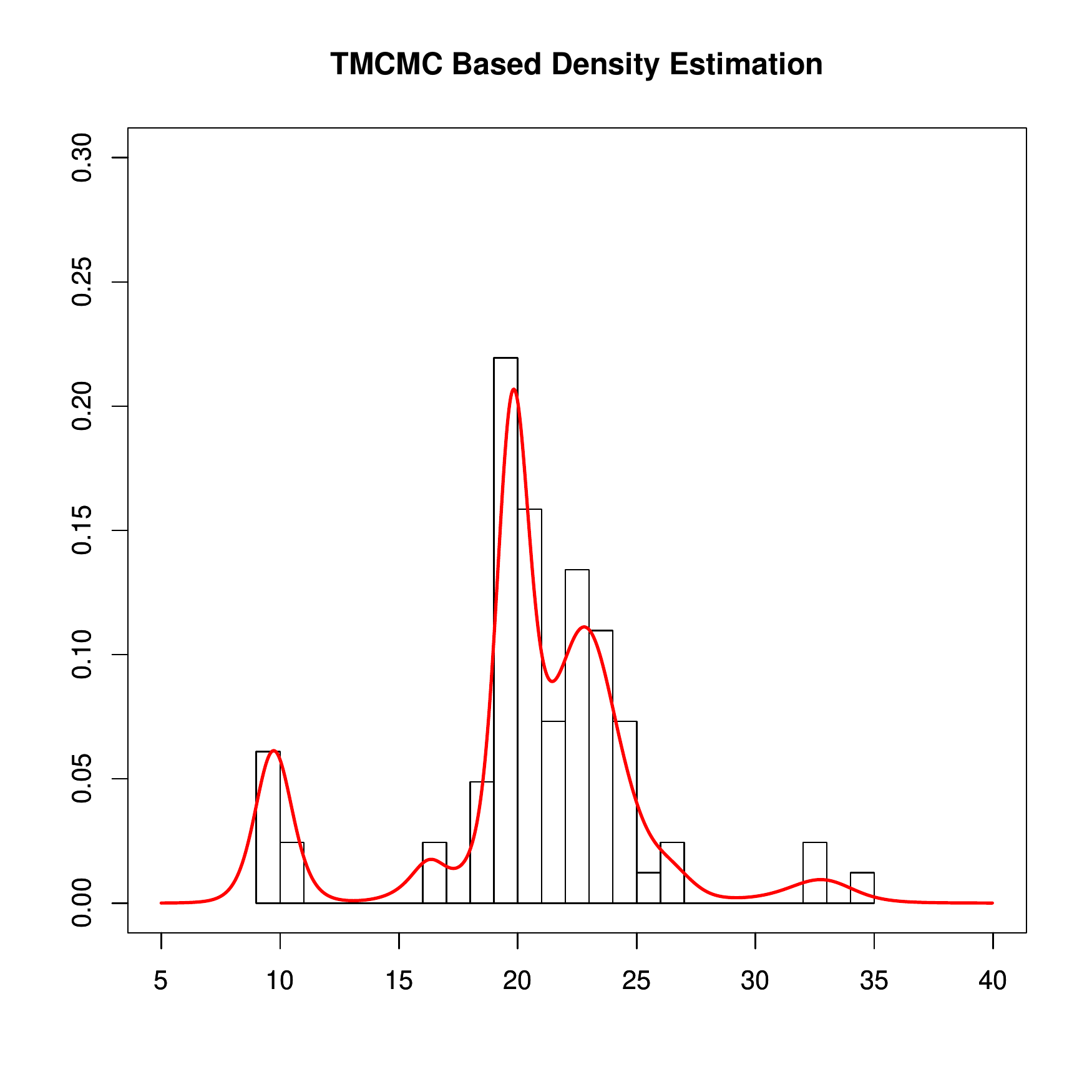}}
	\hspace{2mm}
	\subfigure [Poinwise variances with TMCMC.]{ \label{fig:galaxy_tmcmc2}
	\includegraphics[width=7.5cm,height=7.5cm]{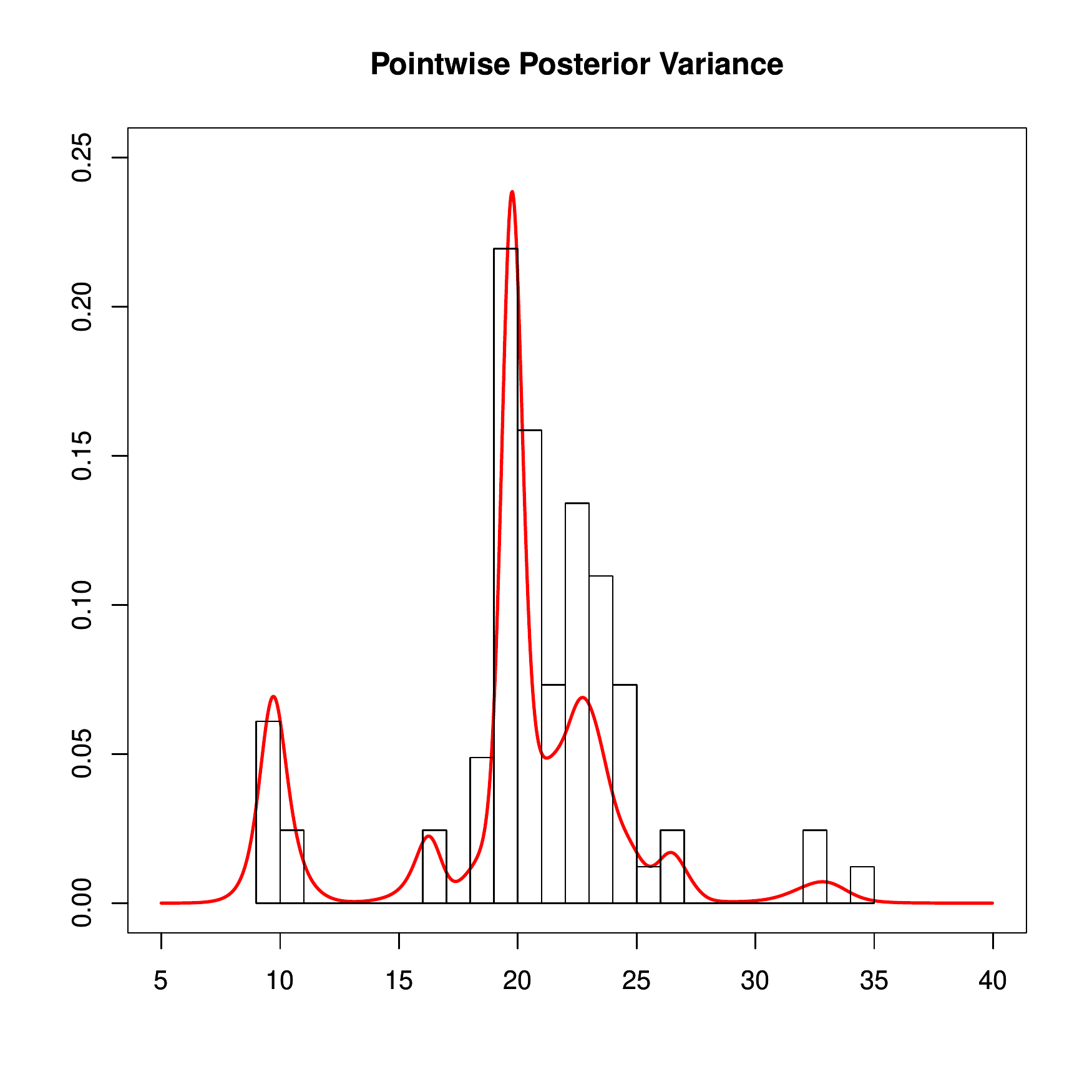}}\\
	\vspace{2mm}
	\subfigure [TMCMC autocorrelation plot for $\nu_{30}$.]{ \label{fig:galaxy_tmcmc3}
	\includegraphics[width=7.5cm,height=7.5cm]{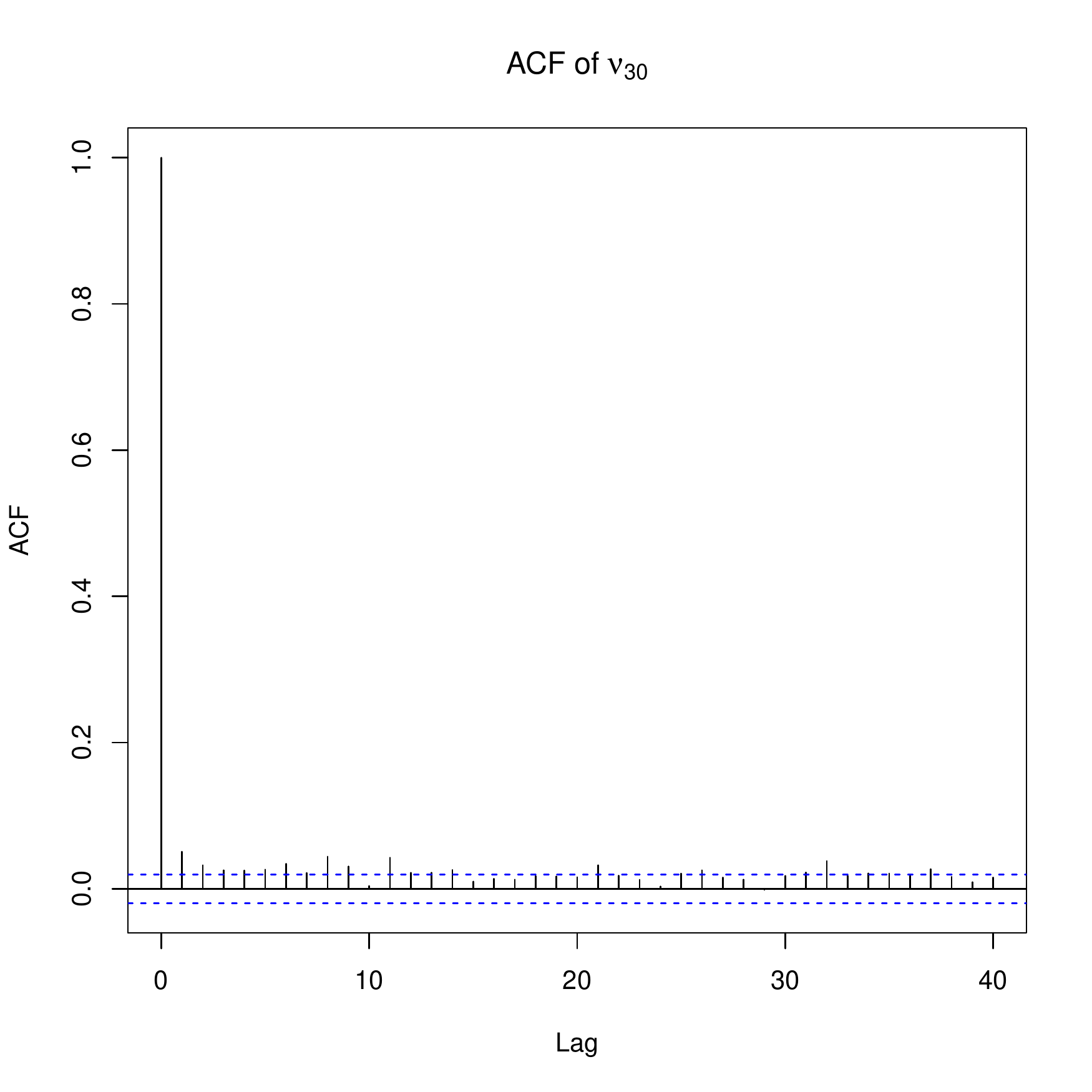}}
	\vspace{2mm}
	\subfigure [TMCMC autocorrelation plot for $\tau_{30}$.]{ \label{fig:galaxy_tmcmc4}
	\includegraphics[width=7.5cm,height=7.5cm]{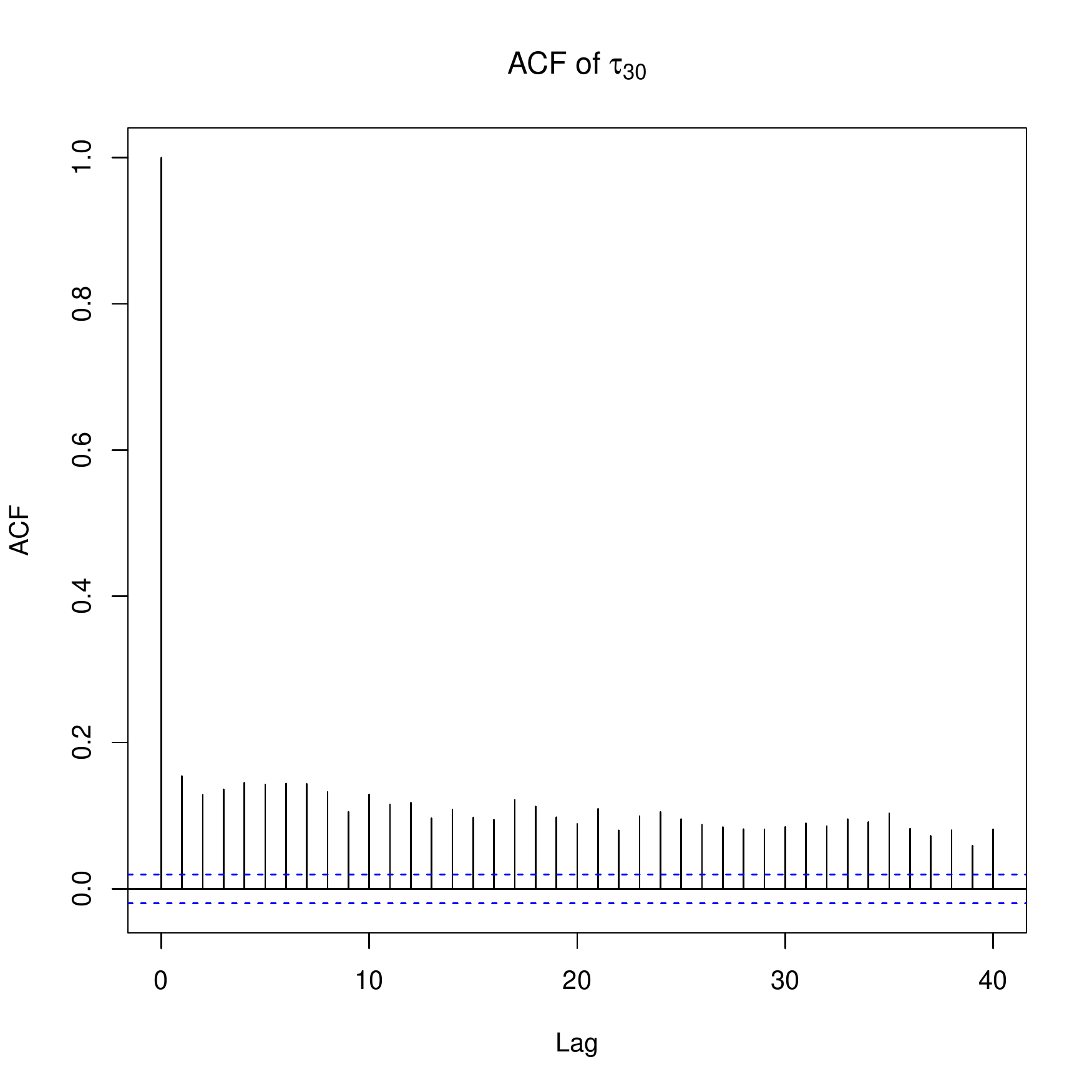}}
	\caption{TMCMC sampling for the galaxy data with small scaling constant.}
	\label{fig:galaxy_tmcmc}
\end{figure}

\section{Summary and conclusion}
\label{sec:conclusion}

MCMC sampling from posterior DP mixtures offers substantial challenges in terms of both mixing and implementation time. Despite the existence of a plethora of
MCMC algorithms for DP mixtures, it is extremely difficult to single out any algorithm for general application. More disconcertingly, it is not possible to rigorously
address if the underlying Markov chain has at all converged to the target DP mixture posterior. The ideal situation of $iid$ sampling is usually perceived as
inconceivable and impractical by the statistical and probabilistic community, even in finite-dimensional setups. In finite-dimensional situations, as well as in
multimodal and variable-dimensional contexts, and even for doubly intractable target distributions, we attempted to come up with efficient $iid$ sampling
procedures (\ctn{Bhatta21a}, \ctn{Bhatta21b}, \ctn{Bhatta21c}). In this article, we have attempted to provide a novel $iid$ sampling procedure for DP mixtures
in general, focussing particularly on the more general, flexible and efficient DP mixture model of \ctn{Bhattacharya08}. The key idea is of course a generalization
of our aforementioned works on $iid$ sampling, but the infinite-dimensional and discrete nature of DP called for some significant modification of our existing theory
to create a valid $iid$ sampling procedure for DP mixtures. Our theory does not depend upon conjugate or non-conjugate setups and works equally well for both
situations. Application of our $iid$ method to three benchmark datasets revealed excellent performance, including very fast parallel computation.

It is important to note that \ctn{Sabya12} had already created a novel perfect sampling procedure for the DP mixture of \ctn{Bhattacharya08}, integrating out
the random measure $G$ and creating appropriate bounding chains associated with an efficient Gibbs sampling procedure. The method encompasses both conjugate and
non-conjugate cases, and so, is highly relevant and comparable with our current work. However, the theory requires compact parameter space, which is not required
in this current work. Moreover, the computation required by \ctn{Sabya12} seems to be too intensive for generating a large number of $iid$ realizations. For instance,
application of their method to the galaxy data with $M=10$ took $11$ days to generate a single perfect realization! Parallelizing their method would only halve the time,
which still would not serve the purpose of generating adequate number of $iid$ realizations. In contrast, in our current work, we have been able to generate $10,000$
realizations for the galaxy data in just $5$ minutes, even with $M=30$! Although our procedure is based on truncating the random measure $G$, the upper bound
of Theorem \ref{theorem:truncation}, illustrated in detail in Remark \ref{remark:truncation}, 
shows almost indistinguishable agreement of the truncated model with the original one. Indeed, for all practical purposes, simulations from the original and the truncated
DP mixture models of \ctn{Bhattacharya08} would be identical. 

Although various advantages of \ctn{Bhattacharya08} over the traditional DP mixture model are established, Theorem \ref{theorem:truncation} and 
Remark \ref{remark:truncation} bring out yet another great advantage of the former with respect to truncation. Indeed, the truncated DP mixture model of \ctn{Bhattacharya08}
is in much closer agreement with the original one compared to that in the case of the traditional DP mixture model.

In fine, we remark that although DP mixtures clearly dominate the literature on Bayesian nonparametrics, there are various other classes of nonparametric Bayesian models
as well, for instance, those based on P\'{o}lya trees. In our future work, we intend to further generalize our $iid$ sampling procedure to encompass all
nonparametric Bayesian models.

%\newpage

\renewcommand\baselinestretch{1.3}
\normalsize
\bibliography{irmcmc}
%\bibliography{irmcmc,references,references1}

%\newpage
%\input{figures}

\end{document}